\DeclareMathOperator{\Id}{Id}
\DeclareMathOperator{\Forb}{Forb}
\DeclareMathOperator{\Age}{Age}
\newcommand{\fA}{{\bf A}}
\newcommand{\bE}{{\mathfrak E}}
\newcommand{\mQ}{\mathbb Q}
\date{\today}
\author
{Manuel Bodirsky}
\address{Institut f\"{u}r Algebra\\TU Dresden\\01062 Dresden\\Germany}
\email{Manuel.Bodirsky@tu-dresden.de}
   \urladdr{http://www.math.tu-dresden.de/~bodirsky/}
    \thanks{The author has received funding from the European Research Council under the European Community's Seventh Framework Programme (FP7/2007-2013 Grant Agreement no. 257039, CSP-Infinity). The final publication is available at Springer via \url{http://dx.doi.org/}.}
\title[Finite Relation Algebras with Normal Representations]{Finite Relation Algebras \\with Normal Representations}
\theoremstyle{plain}
    \newtheorem{thm}{Theorem}
        \newtheorem{theorem}[thm]{Theorem}
    \newtheorem{proposition}[thm]{Proposition}
    \newtheorem{corollary}[thm]{Corollary}
    \newtheorem{conjecture}[thm]{Conjecture}
\theoremstyle{definition}
    \newtheorem{definition}{Definition}
    \newtheorem{example}[thm]{Example}
\newcommand{\bA}{\ensuremath{\mathfrak{A}}}
\newcommand{\bB}{\ensuremath{\mathfrak{B}}}
\newcommand{\bC}{\ensuremath{\mathfrak{C}}}
\newcommand{\Fresse}{Fra\"{i}ss\'{e}}
\newcommand{\ignore}[1]{}
\DeclareMathOperator{\Csp}{CSP}
\DeclareMathOperator{\Aut}{Aut}
\DeclareMathOperator{\Pol}{Pol}
\begin{document}

\maketitle
\begin{abstract}
One of the traditional applications
of relation algebras is to provide a setting 
for infinite-domain constraint satisfaction problems.
Complexity classification for these computational problems has been one of the major open research challenges of this application field. The past decade has brought significant progress on the theory of constraint satisfaction, both over finite and infinite domains. This progress has been achieved independently from the relation algebra approach. The present article translates 
the recent findings into the traditional relation algebra setting, 
and points out a series of open problems at the interface between model theory and the theory of relation algebras. 
\end{abstract}

\section{Introduction}
One of the fundamental computational problems for a relation algebra $\fA$ is the \emph{network satisfaction problem for $\fA$}, which is to determine for a given $\fA$-network $N$ whether it is satisfiable in some representation of $\fA$ (for definitions, see Sections~\ref{sect:ra} and~\ref{sect:networks}). 
Robin Hirsch named in 1995 the 
 \emph{Really Big Complexity Problem (RBCP)}
 for relation algebras,  
which is to \emph{`clearly map out which (finite) 
relation algebras are tractable and which are intractable'}~\cite{Hirsch}. 
For example, for the Point Algebra the network satisfaction problem is in P and for Allen's Interval Algebra it is NP-hard. One of the standard methods to show that the network satisfaction problem for a finite relation algebra is in P is via establishing local consistency. 
The question whether the network satisfaction problem for $\fA$ can be solved by local consistency methods is another question that has been studied intensively 
for finite relation algebras $\fA$ (see~\cite{Qualitative-Survey} for a survey on the second question). 


If $\fA$ has a fully universal square representation (we follow the terminology of Hirsch~\cite{Hirsch})
then the network satisfaction problem for $\fA$
can be formulated as a constraint satisfaction problem (CSP) for a countably infinite structure. 
The complexity of constraint satisfaction is a research direction that has seen quite some progress in the past years. The \emph{dichotomy conjecture} of Feder and Vardi from 1993 states that every CSP for a finite structure is in P or NP-hard; the \emph{tractability conjecture}~\cite{JBK} is a stronger conjecture that predicts precisely which CSPs are in P and which are NP-hard. 
 Two independent proofs of these conjectures appeared in 2017~\cite{BulatovFVConjecture,ZhukFVConjecture}, based on concepts and tools from universal algebra. An earlier result of Barto and Kozik~\cite{BoundedWidth} gives an exact characterisation of those finite-domain CSPs that can be solved by local consistency methods. 

Usually, the network satisfaction problem
 for a finite relation algebra $\fA$ cannot be formulated as a CSP for a finite structure. However, suprisingly often 
 it can be formulated as a CSP for a countably infinite \emph{$\omega$-categorical} structure $\bB$. 
For an important subclass of $\omega$-categorical 
structures we have a tractability conjecture, too. 
The condition that supposedly characterises containment in P can be formulated in many non-trivially equivalent ways~\cite{BKOPP,BartoPinskerDichotomy,wonderland} and has been confirmed in numerous special cases, see for instance the articles~\cite{tcsps,BMPP16,posetCSP16,Phylo-Complexity,MMSNP,Bodirsky-Mottet} and the references therein. 

In the light of the recent advances in constraint satisfaction, both over finite and infinite domains, we revisit the RBCP and discuss the current state of the art. In particular, we observe that if $\fA$ has
a \emph{normal representation} (again, we follow the terminology of Hirsch~\cite{Hirsch}), then the network satisfaction problem for $\fA$ falls into the scope of the infinite-domain tractability conjecture. 
We also show that there is an algorithm that decides for a given finite relation algebra $\fA$ with a fully universal square representation 
whether $\fA$ has a normal representation. (In other words, there is an algorithm that decides for a given $\fA$ whether the class of atomic $\fA$-networks has the amalgamation property.) 
The scope of the tractability conjecture is larger, though. We describe an example of a finite relation algebra which has an $\omega$-categorical fully universal square representation 
(and a polynomial-time tractable network satisfaction problem) which is not normal, but which does fall into the scope of the conjecture.

Whether the infinite-domain tractability conjecture
might contribute to the resolution of the RBCP in general remains open; we present several questions  in Section~\ref{sect:conclusion} whose answer would shed some light on this question. 
These questions concern the existence of $\omega$-categorical fully universal square representations and are of independent interest, and in my view they are central to the theory of representable finite relation algebras. 





\section{Relation Algebras}
\label{sect:ra}
A \emph{proper relation algebra} is a set $B$ together with
a set $\mathcal R$ of binary relations over $B$ such that
\begin{enumerate}
\item $\Id := \{(x,x) \; | \; x \in B\} \in \mathcal R$;
\item If $R_1$ and $R_2$ are from $\mathcal R$, then $R_1 \vee R_2 := R_1 \cup R_2 \in \mathcal R$;
\item $1 := \bigcup_{R \in \mathcal R} R \in \mathcal R$;
\item $0 := \emptyset \in \mathcal R$;
\item If $R \in \mathcal R$, then $-R := 1 \setminus R \in \mathcal R$;
\item If $R \in \mathcal R$, then $R^{\smallsmile} := \{(x,y) \; | \; (y,x) \in R\} \in \mathcal R$;
\item If $R_1$ and $R_2$ are from $\mathcal R$, then $R_1 \circ R_2 \in \mathcal R$; where
$$ R_1 \circ R_2 := \{(x,z) \; | \; \exists y ((x,y) \in R_1 \wedge (y,z) \in R_2)\} \; .$$
\end{enumerate}

We want to point out that in this standard definition of proper relation algebras it is \emph{not} required that $1$ denotes $B^2$.
However, in most examples, 
$1$ indeed denotes $B^2$; in this case we say that the proper relation algebra is \emph{square}. 
The inclusion-wise 
minimal non-empty elements of $\mathcal R$ 
are called the \emph{basic relations} of the proper relation algebra.

\begin{example}[The Point Algebra]\label{expl:point-algebra}
Let $B={\mathbb Q}$ be the set of rational numbers,
and consider 
$$\mathcal R = \{\emptyset,=,<,>,\leq,\geq,\neq,{\mathbb Q}^2\} \; .$$
Those relations form a proper relation algebra (with the basic relations $<,>,=$, and where $1$ denotes ${\mathbb Q}^2$) which is known under the name \emph{point algebra}.
\qed \end{example}


The \emph{relation algebra associated to $(B,{\mathcal R})$} is the algebra $\fA$ with the domain 
$A := {\mathcal R}$ and the signature $\tau := \{\vee,-,0,1,\circ,^{\smallsmile},\Id\}$ obtained from $(B,{\mathcal R})$ in the obvious way. 
An \emph{abstract relation algebra} 
is a $\tau$-algebra
that satisfies some of the laws that hold for the respective operators in a proper relation algebra. We do not need the precise definition of an abstract relation algebra in this
article since we deal exclusively with \emph{representable} relation algebras:
a \emph{representation}  
of an abstract relation algebra $\fA$ is a relational structure $\bB$ whose signature is $A$;
that is, the elements of the relation algebra are the relation symbols of $\bB$.
Each relation symbol $a \in A$ is associated
 to a binary relation $a^{\bB}$ 
over $B$ such that the set of relations of $\bB$ induces a proper relation algebra, and the map $a \mapsto a^{\bB}$ is an isomorphism with respect to the operations (and constants)
$\{\vee,-,0,1,\circ,^{\smallsmile},\Id\}$.
In this case, we also say that $\fA$ is the \emph{abstract relation algebra of $\bB$}.
An abstract relation algebra that has a representation is called
\emph{representable}. 
For $x,y \in A$, we write 
$x \leq y$ as a shortcut for the partial order defined by $x \vee y = y$. 
The minimal elements of $A \setminus \{0\}$
with respect to $\leq$ are called the \emph{atoms} of $\fA$. In every representation of $\fA$, the atoms denote the basic relations of the representation. 
We mention that 
there are abstract finite relation algebras
that are not representable~\cite{LyndonRelationAlgebras},
and that the question whether a finite relation
algebra is representable is undecidable~\cite{HirschHodkinsonRepresentability}.

\begin{example}\label{expl:a-pa}
The (abstract) point algebra
is a relation algebra with 8 elements and 3 atoms, $=$, $<$, and $>$, and can be described as follows.
The values of the composition operator for the atoms of the point algebra are shown in the table of Figure~\ref{fig:point-algebra}.
Note that this table determines the full composition table.
The inverse $(<)^{\smallsmile}$ of $<$ is $>$, and $\Id$ denotes $=$ which is its own inverse.
This fully determines the relation algebra.
\begin{figure}
\begin{center}
\begin{tabular}{|l||l|l|l|}
\hline
$\circ$ & $=$ & $<$ & $>$ \\
\hline \hline
$=$ & $=$ & $<$ & $>$ \\
\hline
$<$ & $<$ & $<$ & $1$ \\
\hline
$>$ & $>$ & $1$ & $>$ \\
\hline
\end{tabular}
\caption{The composition table for the basic relations in the point algebra.}
\label{fig:point-algebra}
\end{center}
\end{figure}
The proper relation algebra with domain $\mQ$ presented in Example~\ref{expl:point-algebra} is a representation of the point algebra. \qed
\end{example}

\section{The network satisfaction problem}
\label{sect:networks}
Let $\fA$ be a finite relation algebra with domain $A$. An \emph{${\fA}$-network} $N = (V;f)$ consists of a finite set of nodes $V$ and a function $f \colon V \times V \rightarrow A$. 

A network $N$ is called
\begin{itemize}
\item \emph{atomic}
if the image of $f$ only contains atoms of $\fA$ 
and if 
\begin{align}
f(a,c) \leq f(a,b) \circ f(b,c) \text{ for all $a,b,c \in V$}
\label{eq:atomic}
\end{align} 
(here we follow again the definitions in~\cite{Hirsch});
\item \emph{satisfiable in $\bB$}, for a representation $\bB$ of $\fA$, if there exists
 a map $s \colon V \to B$ (where $B$ denotes the domain of $\bB$) 
such that for all $x,y \in V$
$$(s(x),s(y)) \in f(x,y)^{\bB};$$
\item \emph{satisfiable} if $N$ is satisfiable in some representation $\bB$ of $\fA$. 
\end{itemize}

The \emph{(general) network satisfaction problem for a finite relation algebra ${\bf A}$} is the computational problem
to decide whether a given $\fA$-network 
is satisfiable. There are finite relation algebras
$\fA$ where 
this problem is undecidable~\cite{Hirsch-Undecidable}. 
A representation $\bB$ of $\fA$ 
is called 
\begin{itemize}
\item \emph{fully universal} if every atomic $\fA$-network is satisfiable in $\bB$; 
\item \emph{square} if its relations form a proper relation algebra that is square. 
\end{itemize}
The point algebra is an example of a relation algebra with a fully universal square representation.  
Note that if $\fA$ has a fully universal representation, then the network satisfaction problem for $\fA$ is decidable in NP: 
for a given network $(V,f)$, simply select for each $x \in V^2$ an atom $a \in A$ with $a \leq f(x)$,
replace $f(x)$ by $a$, and then exhaustively check condition~(\ref{eq:atomic}). 
Also note that a finite relation algebra
has a fully universal representation if and only if
the so-called path-consistency procedure decides satisfiability of atomic $\fA$-networks (see, e.g.,~\cite{Qualitative-Survey,HuangLR13}).

However, not all finite relation algebras have a fully universal representation. 
An example of a relation algebra
with 4 atoms which has a representation with seven 
elements but where path consistency of atomic networks 
does not imply consistency, called ${\bf B}_9$, has been given in~\cite{LiKRL08}. 
A representation of ${\bf B}_9$ with domain
$\{0,1,\dots,6\}$ is given by the basic relations $\{R_0,R_1,R_2,R_3\}$ where 
$R_i = \{(x,y) : |x-y|=i \mod 7\}$, for $i \in \{0,1,2,3\}$. 
In fact, every representation of ${\bf B}_9$ is isomorphic
to this representation. Let $N$ be the network $(V,f)$ with $V = \{a,b,c,d\}$,
$f(a,b) = f(c,d) = R_3$, $f(a,d) = f(b,c) = R_2$, $f(a,c) = f(b,d) = R_1$, $f(i,i) = R_0$ for all $i \in V$, and $f(i,j) = f(j,i)$ for all $i,j \in V$. Then $N$ is atomic but not satisfiable. 



\section{Constraint Satisfaction Problems}
Let $\bB$ be a structure with a (finite or infinite)
domain $B$ and a finite relational signature $\rho$. 
Then the \emph{constraint satisfaction problem for $\bB$} is the computational problem of deciding
whether a 
finite $\rho$-structure $\bE$ homomorphically maps to $\bB$. 
Note that if $\bB$ 
is a square representation of $\fA$, 
then the input $\bE$ can be viewed as an $\fA$-network $N$. 
The nodes
of $N$ are the elements of $\bE$. 
To define $f(x,y)$ for variables $x,y$ of the network,
let $a_1,\dots,a_k$ be a list of 
all elements $a \in A$ such that 
$(x,y) \in a^\bE$. Then define
$f(x,y) = (a_1 \wedge \cdots \wedge a_k)$;
if $k = 0$, then $f(x,y) = 1$. 
Observe that $\bE$ has a homomorphism to $\bB$ if and only if $N$ is satisfiable in $\bB$ (here we use the assumption that $\bB$ is a square representation). 

Conversely, when $N$ is an $\fA$-network, 
then we view $N$ 
as the $A$-structure $\bE$ whose domain are the nodes of $N$, 
and where $(x,y) \in r^{\bE}$ if and only if
$r = f(x,y)$. Again, $\bE$ has a homomorphism to $\bB$ if and only if $N$ is satisfiable in $\bB$. 




\begin{proposition}
\label{prop:square}
Let $\bB$ be a 
fully universal square representation of a finite relation algebra $\fA$. 
Then the network satisfaction problem for $\fA$
equals the constraint satisfaction problem for
$\bB$ (up to the translation between $\fA$-networks and finite $A$-structures presented above). 
\end{proposition}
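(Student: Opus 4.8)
The plan is to verify, at the level of individual instances, that the translation between $\fA$-networks and finite $A$-structures described above carries yes-instances of the network satisfaction problem for $\fA$ to yes-instances of the constraint satisfaction problem for $\bB$, and conversely. The discussion preceding the proposition already records that, in both directions of the translation, a finite $A$-structure $\bE$ maps homomorphically to $\bB$ if and only if the associated network $N$ is satisfiable \emph{in $\bB$}. Hence it remains only to prove a statement purely about satisfiability of a single $\fA$-network $N = (V;f)$: namely that $N$ is satisfiable --- i.e.\ satisfiable in \emph{some} representation of $\fA$ --- if and only if $N$ is satisfiable in $\bB$ in particular. One implication is immediate, as $\bB$ is itself a representation of $\fA$.

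For the converse implication I would reduce to the atomic case and then invoke that $\bB$ is \emph{fully universal}. So suppose $s \colon V \to B'$ witnesses that $N$ is satisfiable in some representation $\bB'$. Since $\fA$ is finite, $\bB'$ has only finitely many basic relations, they are pairwise disjoint, and every relation of $\bB'$ is a union of them; consequently, for each pair $x,y \in V$ there is a unique atom $f'(x,y)$ of $\fA$ with $f'(x,y) \leq f(x,y)$ and $(s(x),s(y)) \in f'(x,y)^{\bB'}$. Put $N' := (V; f')$. I would then check that $N'$ is atomic: its image consists of atoms by construction, and for all $a,b,c \in V$ one has $(s(a),s(c)) \in f'(a,b)^{\bB'} \circ f'(b,c)^{\bB'} = (f'(a,b) \circ f'(b,c))^{\bB'}$, so the unique basic relation of $\bB'$ through $(s(a),s(c))$, which is $f'(a,c)^{\bB'}$, is contained in $(f'(a,b) \circ f'(b,c))^{\bB'}$; applying the inverse of the isomorphism $a \mapsto a^{\bB'}$ yields $f'(a,c) \leq f'(a,b) \circ f'(b,c)$, which is exactly condition~(\ref{eq:atomic}). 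By full universality, $N'$ is satisfiable in $\bB$, say via $t \colon V \to B$; and since $f'(x,y) \leq f(x,y)$ for all $x,y$ and $a \mapsto a^{\bB}$ preserves $\leq$, this same $t$ witnesses satisfiability of $N$ in $\bB$. Together with the observation that both directions of the translation are computable in linear time, this gives the asserted identification of the two problems.

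I expect the only genuine work to lie in the reduction to the atomic case: verifying that $N'$ is atomic in the sense of~(\ref{eq:atomic}) and that it refines $N$ component-wise. Both rely on the routine structural facts that in a finite proper relation algebra every relation is a disjoint union of basic relations and that composition of relations is computed inside the algebra; everything else is a direct unwinding of the definitions.
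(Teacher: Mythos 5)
Your proof is correct and follows essentially the same route as the paper: reduce to showing that satisfiability in some representation implies satisfiability in $\bB$, refine the network to an atomic one using the satisfying assignment, and invoke full universality. You merely spell out in detail the step the paper leaves implicit, namely how the atomic refinement $f'$ is extracted from the witnessing map $s$ and why it satisfies condition~(\ref{eq:atomic}).
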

\begin{proof}
We have to show that a network is satisfiable if and only if it has a homomorphism to $\bB$. Clearly,
if $N$ has a homomorphism to $\bB$ then it is satisfiable in $\bB$, and hence satisfiable. For the other direction, 
suppose that the $\fA$-network $N = (V,f)$ is satisfiable in some representation of $\fA$. Then there exists for each $x \in V^2$ an atomic 
$a \in A$ such that $a \leq f(x)$ and 
such that the network $N'$ obtained from
$N$ by replacing $f(x)$ by $a$
satisfies~(\ref{eq:atomic}); hence, $N'$ is atomic and satisfiable in $\bB$ since $\bB$ is fully universal. Hence, $N$ is satisfiable in $\bB$, too. 
\end{proof}

For general infinite structures $\bB$ a systematic understanding of the computational complexity of
 $\Csp(\bB)$ is a hopeless endeavour~\cite{BodirskyGrohe}. 
However, if $\bB$ is a \emph{first-order reduct of a finitely bounded homogeneous structure} (the definitions can be found below), then the universal-algebraic tractability conjecture for finite-domain CSPs can be generalised. 
This condition is sufficiently general 
so that it includes fully universal square representations of almost all the concrete finite relation algebras studied 
in the literature, and the condition also captures the class of finite-domain CSPs. 
As we will see, the concepts of 
\emph{finite boundedness} and \emph{homogeneity} are conditions that have already been studied in the relation algebra literature. 

\subsection{Finite boundedness}
Let $\rho$ be a relational signature, and let
 $\mathcal F$ be a set of $\rho$-structures. 
 Then $\Forb({\mathcal F})$ denotes the class
 of all finite $\rho$-structures $\bA$ such that no structure in ${\mathcal F}$ embeds into $\bA$.
For a $\rho$-structure $\bB$ 
 we write $\Age(\bB)$ for the class of all finite $\rho$-structures that embed into $\bB$. 
 We say that $\bB$ is \emph{finitely bounded}
if $\bB$ has a finite relational signature and  there exists a finite set of finite $\tau$-structures
${\mathcal F}$ such that $\Age(\bB) = \Forb(\mathcal F)$. A simple example of a finitely bounded structure is $({\mathbb Q};<)$. 
It is easy to see that the constraint satisfaction problem 
of a finitely bounded structure $\bB$ is in NP. 

\begin{proposition}\label{prop:fin-bound}
Let $\fA$ be a finite relation algebra
with a fully universal square representation 
$\bB$. Then $\bB$ is finitely bounded.
\end{proposition}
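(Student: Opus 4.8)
The plan is to produce a finite set $\mathcal F$ of finite $A$-structures with $\Age(\bB) = \Forb(\mathcal F)$. Since $\fA$ is finite, the signature $A$ of $\bB$ is finite, so there are only finitely many $A$-structures on at most three elements up to isomorphism; I will take $\mathcal F$ to consist of those among them that do not embed into $\bB$. The inclusion $\Age(\bB) \subseteq \Forb(\mathcal F)$ is immediate, because $\Age(\bB)$ is closed under induced substructures, so no member of $\mathcal F$ embeds into a member of $\Age(\bB)$. For the converse I will use the following structural fact about square representations: a finite relation algebra is atomic, its atoms are pairwise disjoint with join $1$, and in $\bB$ the operations $\vee$ and $-$ are union and complementation within $1^\bB = B^2$ (this is where squareness is used); hence the relations $a^\bB$, for $a$ an atom, partition $B^2$, and for every $c \in A$ and every pair $p$ lying in the atom-relation $a^\bB$ one has $p \in c^\bB$ if and only if $a \le c$. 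In particular, an injective map from an $A$-structure into $B$ is an embedding as soon as it maps every pair into the correct atom.

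So let $\bE$ be a finite $A$-structure all of whose substructures on at most three elements embed into $\bB$. From the one- and two-element substructures of $\bE$ one reads off that the set $\{a \in A : (x,y) \in a^\bE\}$ is, for each pair $(x,y)$, the principal filter $\{a : g(x,y) \le a\}$ generated by a unique atom $g(x,y)$, and moreover that $g(y,x) = g(x,y)^{\smallsmile}$ and that $g(x,y) \not\le \Id$ whenever $x \ne y$. From the three-element substructures one gets the atomic-network inequality $g(x,z) \le g(x,y) \circ g(y,z)$ for all $x,y,z$: the embedding of the substructure on $\{x,y,z\}$ exhibits a pair in $g(x,z)^\bB \cap (g(x,y) \circ g(y,z))^\bB$, and since $g(x,z)$ is an atom and the atoms partition $B^2$, this meeting forces $g(x,z) \le g(x,y) \circ g(y,z)$. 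Hence $N := (V(\bE); g)$ is an atomic $\fA$-network. Because $\bB$ is fully universal, $N$ is satisfiable in $\bB$, that is, there is $s \colon V(\bE) \to B$ with $(s(x),s(y)) \in g(x,y)^\bB$ for all $x,y$. This $s$ is injective, since for $x \ne y$ we have $g(x,y) \not\le \Id$ and so $(s(x),s(y)) \notin \Id^\bB$; and it is an embedding of $\bE$ into $\bB$, since by the structural fact $(s(x),s(y)) \in a^\bB \Leftrightarrow g(x,y) \le a \Leftrightarrow (x,y) \in a^\bE$. Thus $\bE \in \Age(\bB)$, which proves $\Forb(\mathcal F) \subseteq \Age(\bB)$ and hence the proposition.

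The only step that is not a routine verification is the passage from local to global consistency — from the hypothesis that all three-element substructures embed, to an embedding of the whole structure — and this is exactly where full universality of $\bB$ is invoked: it converts the atomic network $N$, which is itself assembled from the local hypotheses, into an actual solution. Everything else (the partition property of the atoms in a square representation, and the bookkeeping that $g$ respects converse and the identity and avoids $\Id$ off the diagonal) is what upgrades the map supplied by full universality from a mere homomorphism to an embedding; this, together with the finiteness of $A$ and hence of $\mathcal F$, completes the argument.
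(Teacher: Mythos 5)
Your proposal is correct and follows essentially the same route as the paper's (very terse) proof sketch: bounds of size at most two force each pair of an admissible structure to lie in a unique atom (so that the atom relations partition $B^2$, using squareness), three-element bounds enforce the composition-table inequality, and full universality then converts the resulting atomic network into an embedding. Your write-up simply supplies the details the sketch omits, including the useful observation that the satisfying assignment is automatically injective and an embedding because membership in every relation is determined by the atom.
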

\begin{proof}[Proof sketch]
Besides some 
bounds of size at most two 
that make sure that 
the atomic relations partition $B^2$, 
it suffices to include appropriate three-element structures into ${\mathcal F}$ that can be read off from the composition table of $\fA$. 
\end{proof}

\subsection{Homogeneity}
A relational structure $\bB$ is \emph{homogeneous} 
(or \emph{ultra-homogeneous}~\cite{Hodges})
if every isomorphism between
finite 
substructures of $\bB$ can be extended to an automorphism of $\bB$. A simple example of a homogeneous structure is $({\mathbb Q};<)$.

A representation of a finite relation algebra
$\fA$ is called \emph{normal}
if it is square, fully universal, and homogeneous~\cite{Hirsch}. 
The following is an immediate consequence of Proposition~\ref{prop:square} and
Proposition~\ref{prop:fin-bound}. 

\begin{corollary}\label{cor:normal}
Let $\fA$ be a finite relation algebra with a normal representation $\bB$. 
Then the network satisfaction problem for $\fA$ equals the constraint satisfaction problem for a finitely bounded homogeneous structure. 
\end{corollary}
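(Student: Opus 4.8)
The plan is to chain together the two propositions just established, invoking only the definition of a normal representation. First I would recall that, by definition, a normal representation $\bB$ of $\fA$ is square, fully universal, and homogeneous. The first two of these properties are exactly the hypotheses of Proposition~\ref{prop:square}, so that result immediately identifies the network satisfaction problem for $\fA$ with $\Csp(\bB)$ (modulo the bookkeeping translation between $\fA$-networks and finite $A$-structures described above).

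Next I would invoke Proposition~\ref{prop:fin-bound}, whose hypothesis --- that $\fA$ has a fully universal square representation $\bB$ --- is again met, to conclude that $\bB$ is finitely bounded. Combining this with the homogeneity of $\bB$, which comes for free from normality, we obtain that $\bB$ is a finitely bounded homogeneous structure. Hence the network satisfaction problem for $\fA$ equals $\Csp(\bB)$ for this particular finitely bounded homogeneous structure $\bB$, which is precisely the claim.

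Since both ingredients are already in hand, there is no genuine obstacle here; the only point requiring a little care is that the \emph{same} structure $\bB$ simultaneously plays all three roles, so that no further construction is needed. In particular one should note in passing that $\bB$ has a finite relational signature, namely $A$, because $\fA$ is finite; this is part of the definition of finite boundedness and is also what makes $\Csp(\bB)$ a well-posed computational problem.
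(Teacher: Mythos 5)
Your proof is correct and is exactly the argument the paper intends: the corollary is stated there as an immediate consequence of Proposition~\ref{prop:square} and Proposition~\ref{prop:fin-bound}, combined with the homogeneity built into the definition of a normal representation. Your added remark that the single structure $\bB$ plays all three roles, and that its signature $A$ is finite, is a sensible clarification but not a departure from the paper's route.
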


A versatile tool to construct homogeneous 
structures from classes of finite structures
is \emph{amalgamation} \`a la \Fresse. 
We present it for the special
case of \emph{relational structures}; 
this is all that is needed here. 
An \emph{embedding} of $\bA$ into $\bB$ is an isomorphism
between $\bA$ and a substructure of $\bB$.  
An \emph{amalgamation diagram}
is a pair $(\bB_1,\bB_2)$
where $\bB_1,\bB_2$ are $\tau$-structures such that there exists a substructure $\bA$ of both $\bB_1$ and $\bB_2$ such that all common elements of $\bB_1$ and $\bB_2$ are elements of $\bA$. 
We say that $(\bB_1,\bB_2)$ is a \emph{2-point amalgamation diagram} if $|B_1 \setminus A| = |B_2 \setminus A| = 1$. 
A $\tau$-structure $\bC$ is an \emph{amalgam of $(\bB_1,\bB_2)$ over
  $\bA$} if for $i=1,2$ there are embeddings $e_i$ of $\bB_i$ to $\bC$ such that
$e_1(a)=e_2(a)$ for all $a \in A$.  
In the context of relation algebras $\fA$, 
the amalgamation property can also be formulated with atomic $\fA$-networks, in which case it has been called the \emph{patchwork property}~\cite{HuangLR13}; we stick with  the model-theoretic terminology here since it is older and well-established. 

\begin{definition}\label{def:amalgamation-prop}
An isomorphism-closed class $\mathcal C$ of
$\tau$-structures has the \emph{amalgamation property} if
every amalgamation diagram of structures in $\mathcal C$ has an amalgam in $\mathcal C$. 
A class of finite
$\tau$-structures that contains at most countably many non-isomorphic structures,
has the amalgamation property, and is closed under
taking induced substructures and isomorphisms
is called an \emph{amalgamation class}.
\end{definition}

Note that since we only look at relational structures here (and since we allow structures
to have an empty domain),
the amalgamation property of $\mathcal C$ implies the \emph{joint embedding property (JEP)} for $\mathcal C$, which says that for any two structures $\bB_1,\bB_2 \in \mathcal C$ there exists a structure $\bC \in \mathcal C$ that embeds both $\bB_1$ and $\bB_2$.

\begin{theorem}[\Fresse~\cite{OriginalFraisse,Fraisse}; see~\cite{Hodges}]\label{thm:fresse}
Let $\mathcal C$ be an amalgamation class.
Then there is a homogeneous and at 
most countable $\tau$-structure $\bC$ whose age equals $\mathcal C$.
The structure $\bC$ is unique up to isomorphism, and called
the \emph{\Fresse-limit} of $\mathcal C$.
\end{theorem}

The following is a well-known example of a finite relation algebra 
which has a fully universal square representation, but not a normal one.

\begin{example}\label{expl:branching-time}
The \emph{left linear point algebra} (see~\cite{HirschAlgebraicLogic,Duentsch}) 
is a relation algebra with four atoms, denoted by 
$=$, $<$, $>$, and $|$. Here we imagine that `$x < y$' signifies that
$x$ is \emph{earlier in time than $y$}. The idea
is that at every point in time the past is linearly ordered;
the future, however, is not yet determined and might branch into different worlds; incomparability of time points $x$ and $y$ is denoted by $x | y$. 
We might also think of $x < y$ as 
 \emph{$x$ is to the left of $y$} if we draw points in the plane, and this motivates the name \emph{left linear point algebra}.
The composition operator on those four basic relations is given in Figure~\ref{fig:left-linear}. The inverse $(<)^{\smallsmile}$ of $<$ is $>$, $\Id$ denotes $=$,
and $|$ is its own inverse, and the relation algebra is uniquely
given by this data. 
\begin{figure}
\begin{center}
\begin{tabular}{|c||c|c|c|c|}
\hline
$\circ$ & $=$ & $<$ & $>$ & $|$ \\
\hline \hline
$=$ & $=$ & $<$ & $>$ & $|$ \\
\hline
$<$ & $<$ & $<$ & $\{<,=,>\}$ & $\{<,|\}$ \\
\hline
$>$ & $>$ & $1$ & $>$ & $|$ \\
\hline
$|$ & $|$ & $|$ & $\{>,|\}$ & 1 \\ 
\hline
\end{tabular}
\end{center}
\caption{The composition table for the basic relations in the left linear point algebra.}
\label{fig:left-linear}
\end{figure}
It is well known (for details, see~\cite{Bodirsky}) that the left linear point algebra has a fully universal square representation. 
On the other hand, the 
networks 
drawn in Figure~\ref{fig:amalgam-fails} 
show the failure of amalgamation. 
\end{example}

An algorithm to test whether a finite relation algebra has a normal representation can be found in Section~\ref{sect:testing-normal}.

\begin{figure}
  \begin{center}
  \includegraphics[scale=0.5]{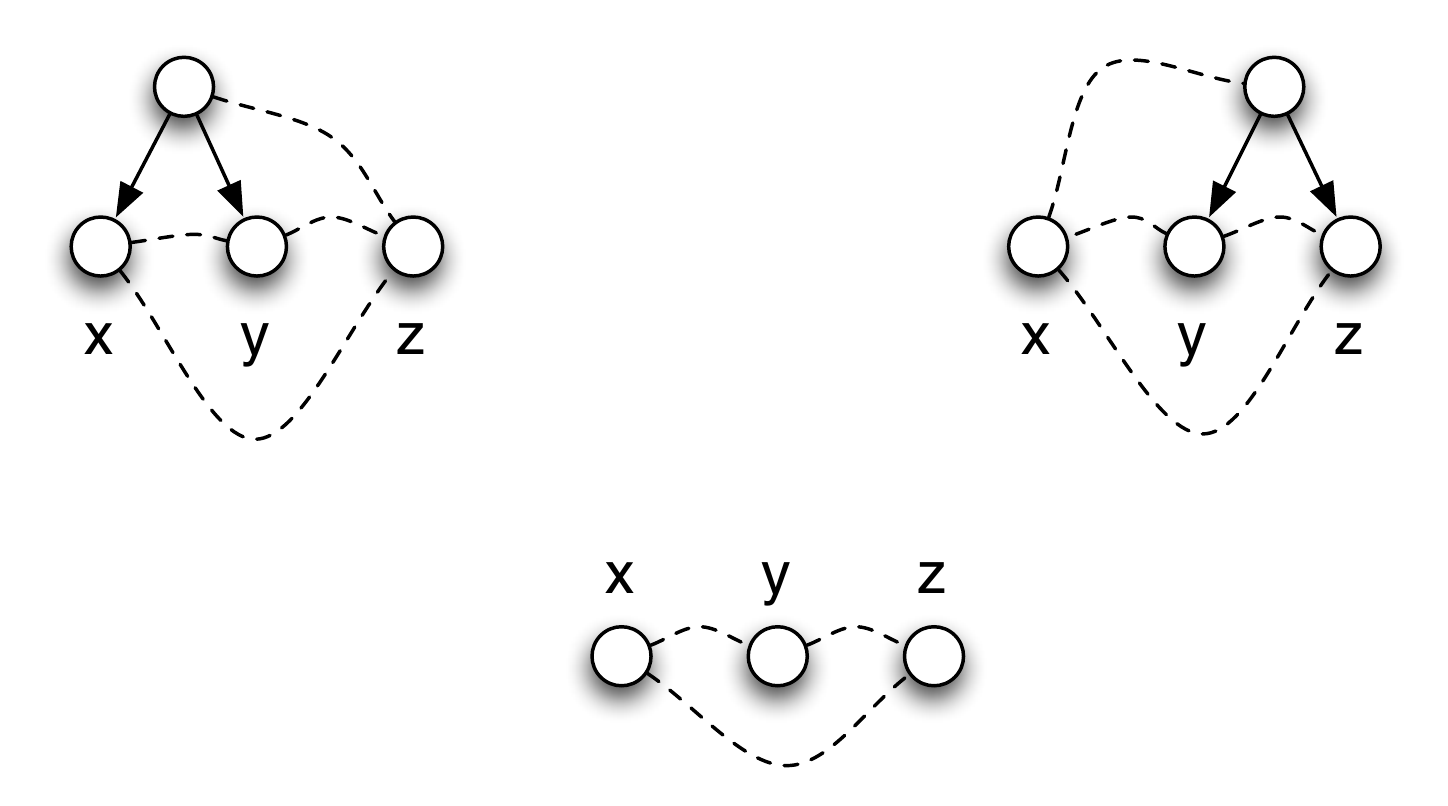}
  \end{center}
\caption{Example showing that atomic networks for the left linear point algebra do not have the amalgamation property. A directed edge from $x$ to $y$ signifies $x < y$, and a dashed edge between $x$ and $y$ signifies $x | y$. 
}
  \label{fig:amalgam-fails}
\end{figure}

\subsection{The infinite-domain dichotomy conjecture}
The infinite-domain dichotomy conjecture applies to a class which is larger than the class of homogeneous finitely bounded structures. To introduce this class we need the concept of \emph{first-order reducts}. 

Suppose that two relational structures 
$\bA$ and $\bB$ have the same domain, 
that the signature of a structure $\bA$ is a subset of the signature of $\bB$,  and that $R^\bA = R^\bB$ for all common relation symbols $R$. Then we call 
$\bA$ a \emph{reduct of $\bB$}, and
$\bB$ an \emph{expansion of $\bA$}. 
In other words, $\bA$ is obtained from $\bB$ by dropping
some of the relations. 
A \emph{first-order reduct of $\bB$} is
a reduct of the expansion of $\bB$ by all relations that are first-order definable in $\bB$. The CSP for a first-order reduct of a finitely bounded homogeneous structure is in NP (see~\cite{Bodirsky-HDR}). 
An example of a structure which is not homogeneous, but a reduct of finitely bounded homogeneous structure is the representation of the left-linear point algebra (Example~\ref{expl:branching-time}) given in~\cite{Bodirsky}. 

\begin{conjecture}[Infinite-domain dichotomy conjecture]
Let $\bB$ be a first-order reduct of a finitely bounded homogeneous structure. Then $\Csp(\bB)$ is either in P or NP-complete. 
\end{conjecture}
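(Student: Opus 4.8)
Since the final statement is a conjecture rather than a settled theorem, what follows is a research programme: I describe the standard line of attack, indicate which half is essentially in place, and isolate the step that is genuinely open. The plan is to split the dichotomy into the \emph{hardness} direction and the \emph{tractability} direction and to study both through the polymorphism clone $\operatorname{Pol}(\bB)$. A first-order reduct $\bB$ of a finitely bounded homogeneous structure has finite relational signature and is $\omega$-categorical by the theorem of Ryll-Nardzewski, so $\operatorname{Pol}(\bB)$ together with the topology of pointwise convergence controls $\Csp(\bB)$ up to polynomial-time equivalence; in particular one may replace $\bB$ by its model-complete core and, for the hardness analysis, pass to structures pp-constructible from it, recalling that $\Csp$ of any first-order reduct of a finitely bounded homogeneous structure lies in NP (see~\cite{Bodirsky-HDR}).

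For the hardness direction I would establish the following dividing line for the model-complete core $\bB'$ of $\bB$: if there is a uniformly continuous minor-preserving map (an h1 clone homomorphism) from $\operatorname{Pol}(\bB')$ onto the clone of projections on a two-element set, then $\bB'$ pp-constructs every finite structure, in particular one whose CSP is NAE-$3$-SAT; since pp-constructions induce polynomial-time reductions, $\Csp(\bB)$ is then NP-hard, and by the previous paragraph NP-complete. This half is available through the algebraic theory of infinite-domain CSPs and the many equivalent formulations of the conjectured tractability criterion recorded in~\cite{BKOPP,BartoPinskerDichotomy,wonderland}.

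The tractability direction is the real obstacle and is open. The programme is: assume $\operatorname{Pol}(\bB')$ admits \emph{no} such map to projections; deduce that it satisfies strong nontrivial identities ``modulo automorphisms'' --- a pseudo-Siggers operation, and ideally an infinite hierarchy of pseudo-weak-near-unanimity operations. One then wants a \emph{canonisation} step: fix a homogeneous Ramsey expansion of the finitely bounded homogeneous structure underlying $\bB'$, and use Ramsey-theoretic averaging to extract from each polymorphism a \emph{canonical} one that induces a well-defined operation on the finitely many orbits of tuples. Canonical polymorphisms descend to operations on a finite quotient, which places one in the finite-domain setting: invoke Barto--Kozik~\cite{BoundedWidth} to solve the bounded-width cases by local consistency, and the Bulatov--Zhuk algorithms~\cite{BulatovFVConjecture,ZhukFVConjecture} otherwise, and then lift the resulting procedure back to $\bB'$ by running local consistency on $\bB'$ in tandem with the finite-domain algorithm on orbit data.

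The main obstacle is exactly this canonisation-and-lifting step in the stated generality. The class of finitely bounded homogeneous structures is enormous, and there is no known uniform way to extract a finite ``digest'' of $\operatorname{Pol}(\bB')$ on which a finite-domain algorithm can be run; even obtaining a single pseudo-Siggers polymorphism under the no-collapse hypothesis rests on deep structural input, and controlling a whole hierarchy of pseudo-weak-near-unanimity operations canonically is not known. I would therefore not expect a complete proof along these lines at present; the realistic target is to carry the programme through for structured subclasses --- first-order reducts of $(\mQ;<)$, of homogeneous graphs, of the random poset, of the MMSNP-related structures, and the like --- where canonisation is available, which is precisely the pattern of the partial results cited in the introduction (e.g.~\cite{tcsps,posetCSP16,MMSNP}).
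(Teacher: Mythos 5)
The statement you were asked about is an open conjecture; the paper states it without proof, and you rightly do not claim one. Your summary of the situation matches the paper's own discussion in Section~\ref{sect:tractability}: the hardness half is settled (pp-interpretation of $K_3$ after adding constants, equivalently a uniformly continuous minor-preserving map to the projection clone, gives NP-hardness, and membership in NP holds for all first-order reducts of finitely bounded homogeneous structures), the Barto--Pinsker theorem supplies the pseudo-Siggers polymorphism in the remaining case, and the tractability direction --- your canonisation-and-lifting step --- is precisely what remains open. No gap to report; this is the correct and honest answer to a request to ``prove'' a conjecture.
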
 


Hence, the infinite-domain dichotomy conjecture implies the RBCP for finite relation algebras with a normal representation. In Section~\ref{sect:tractability} we will see a more specific conjecture that characterises the NP-complete cases and
the cases that are in P. 
\section{The Infinite-Domain Tractability Conjecture}
\label{sect:tractability}
To state the infinite-domain tractability conjecture, we need a couple of concepts that are most naturally introduced for the class of all $\omega$-categorical structures. 
A theory is called \emph{$\omega$-categorical} if all its countably infinite models are isomorphic.
A structure is called \emph{$\omega$-categorical} if its first-order theory is $\omega$-categorical. 
Note that finite structures are $\omega$-categorical since their first-order theories do not have countably infinite models. 
Homogeneous structures $\bB$ with finite relational signature are $\omega$-categorical. This follows from a very useful  characterisation of $\omega$-categoricity given by Engeler, Svenonius, and Ryll-Nardzewski (Theorem~\ref{thm:ryll}).
The set of all automorphisms of $\bB$ 
is denoted by $\Aut(\bB)$. The \emph{orbit} of a $k$-tuple
$(t_1,\dots,t_n)$ under $\Aut(\bB)$ is the set
$\{(a(t_1),\dots,a(t_n)) \mid a \in \Aut(\bB) \}$. Orbits of pairs (i.e., $2$-tuples)
are also called \emph{orbitals}. 

\begin{theorem}[see~\cite{Hodges}]\label{thm:ryll}
A countable structure $\bB$ is $\omega$-categorical if and only if
$\Aut(\bB)$ has only finitely many orbits of $n$-tuples, for all $n \geq 1$. 
\end{theorem}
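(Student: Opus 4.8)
My plan is to prove the two implications separately, after disposing of the case where $\bB$ is finite: then $\bB$ has no countably infinite model, so it is vacuously $\omega$-categorical, and $B^n$ is finite, so $\Aut(\bB)$ certainly has only finitely many orbits on $B^n$. So from now on I assume $\bB$ is countably infinite (and that the signature is countable, which is the only case relevant here). The common theme is that each direction reduces to a back-and-forth argument whose success hinges on a single finiteness statement, namely that the space $S_n$ of complete $n$-types of $\operatorname{Th}(\bB)$ is finite for every $n$; the bridge to orbits is the elementary observation that automorphisms preserve first-order formulas, so $n$-tuples lying in one $\Aut(\bB)$-orbit always realise one and the same complete $n$-type.

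For the direction ``finitely many orbits $\Rightarrow$ $\omega$-categorical'', suppose $\Aut(\bB)$ has finitely many orbits on $B^n$ for every $n$. By the observation above, only finitely many complete $n$-types, say $p_1,\dots,p_m$, are realised in $\bB$. I would then show that \emph{every} complete $n$-type $q$ consistent with $\operatorname{Th}(\bB)$ is one of the $p_i$: if not, pick for each $i$ a formula $\chi_i(x_1,\dots,x_n)$ with $\chi_i\in q$ and $\neg\chi_i\in p_i$; then $\chi := \chi_1\wedge\cdots\wedge\chi_m$ lies in $q$, so the sentence $\exists x_1\cdots x_n\,\chi$ is consistent with the complete theory $\operatorname{Th}(\bB)$ and hence holds in $\bB$, giving a tuple $\bar a$ in $\bB$ satisfying $\chi$; but $\operatorname{tp}^{\bB}(\bar a)$ is one of the $p_i$, which then contains $\chi$ and hence $\chi_i$, contradicting $\neg\chi_i\in p_i$. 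Thus $S_n=\{p_1,\dots,p_m\}$ is finite, and in particular every $p_i$ is isolated modulo $\operatorname{Th}(\bB)$ --- conjoin, over $j\neq i$, formulas separating $p_i$ from $p_j$. Finally I would carry out the standard back-and-forth between two countably infinite models $\bB,\bB'$ of $\operatorname{Th}(\bB)$, building an increasing chain of finite partial maps that preserve types; the extension step works because, when a point $a_{n+1}$ is added on the $\bB$-side, the type $\operatorname{tp}^{\bB}(a_1,\dots,a_{n+1})$ is isolated by some $\theta$, so $\exists x_{n+1}\,\theta$ belongs to $\operatorname{tp}^{\bB}(a_1,\dots,a_n)=\operatorname{tp}^{\bB'}(a_1',\dots,a_n')$, a witness $a_{n+1}'$ therefore exists in $\bB'$, and by isolation it realises the same $(n+1)$-type; the ``back'' step is symmetric, and the union of the chain is an isomorphism $\bB\to\bB'$. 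Hence all countably infinite models of $\operatorname{Th}(\bB)$ are isomorphic, i.e.\ $\bB$ is $\omega$-categorical.

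For the converse, assume $\bB$ is $\omega$-categorical; the crux is to show that $S_n$ is finite for each $n$. If $S_n$ were infinite, then, being an infinite compact Hausdorff space (a compact Hausdorff space in which every point is isolated is discrete, hence finite), it would contain a non-isolated, i.e.\ non-principal, complete type $p$. By the omitting types theorem there is a countable model of $\operatorname{Th}(\bB)$ that omits $p$, whereas compactness together with L\"owenheim--Skolem yields a countable model realising $p$; both models are infinite because $\bB$ is, so they are two non-isomorphic countably infinite models of $\operatorname{Th}(\bB)$, contradicting $\omega$-categoricity. Granting $|S_n|<\infty$: every $p\in S_n$ is realised in some countable model of $\operatorname{Th}(\bB)$, which is isomorphic to $\bB$, so $p$ is realised in $\bB$; and if two $n$-tuples $\bar a,\bar a'$ of $\bB$ realise the same type, the extension property used in the previous paragraph (applied with $\bB'=\bB$) lets one back-and-forth the partial map $\bar a\mapsto\bar a'$ up to an automorphism of $\bB$. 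Hence the orbits of $\Aut(\bB)$ on $B^n$ correspond bijectively to the members of $S_n$, of which there are only finitely many.

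The routine parts are the back-and-forth bookkeeping, the elementary topology of the type spaces, and the transfer of types between elementarily equivalent structures. The genuine content --- and the step I expect to be the main obstacle to write up carefully --- is the finiteness of $S_n$ in the $\omega$-categorical direction, which is exactly where the omitting types theorem is invoked; a close second is verifying that ``same type implies same orbit'' really produces a full automorphism of $\bB$ and not merely a partial elementary map that might get stuck.
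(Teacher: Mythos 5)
The paper does not prove this statement at all---it is the classical Engeler--Svenonius--Ryll-Nardzewski theorem, cited from Hodges without proof. Your argument is the standard textbook proof (finiteness of the type spaces $S_n$, isolation of all types, omitting types for the $\omega$-categorical direction, and back-and-forth for both the isomorphism of countable models and the ``same type implies same orbit'' step) and it is correct as outlined.
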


The following is an easy consequence of Theorem~\ref{thm:ryll}.

\begin{proposition}
First-order reducts of $\omega$-categorical structures are $\omega$-categorical. 
\end{proposition}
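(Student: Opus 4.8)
The plan is to exploit how automorphism groups behave under passing to first-order reducts, together with the Engeler--Svenonius--Ryll-Nardzewski characterisation of $\omega$-categoricity (Theorem~\ref{thm:ryll}). Let $\bB$ be an $\omega$-categorical structure and let $\bA$ be a first-order reduct of $\bB$. First I would dispose of the finite case: if $\bB$ is finite, then $\bA$ has the same (finite) domain and is therefore finite, hence $\omega$-categorical. So I may assume $\bB$ is countably infinite, and then $\bA$ is countably infinite as well, since a first-order reduct has the same domain as the original structure; this is what makes Theorem~\ref{thm:ryll} applicable to both $\bA$ and $\bB$.

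The key observation is that $\Aut(\bB) \subseteq \Aut(\bA)$. Indeed, every relation of $\bA$ is first-order definable in $\bB$, and first-order formulas are preserved by every automorphism of $\bB$; hence every automorphism of $\bB$ preserves every relation of $\bA$ and is thus an automorphism of $\bA$. This inclusion of groups is the only structural input needed.

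Now fix $n \geq 1$. Since $\Aut(\bB) \subseteq \Aut(\bA)$, every orbit of $n$-tuples under $\Aut(\bA)$ is a union of orbits of $n$-tuples under $\Aut(\bB)$, so $\Aut(\bA)$ has at most as many orbits of $n$-tuples as $\Aut(\bB)$. By Theorem~\ref{thm:ryll} applied to $\bB$, the latter number is finite; hence $\Aut(\bA)$ has only finitely many orbits of $n$-tuples. As this holds for every $n \geq 1$, a second application of Theorem~\ref{thm:ryll}, this time to $\bA$, yields that $\bA$ is $\omega$-categorical.

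I do not expect a genuine obstacle here. The only two points requiring a moment of care are the reduction to the countably infinite case (so that the Ryll-Nardzewski theorem can be invoked in both directions) and the routine verification — by induction on the construction of first-order formulas — that first-order definable relations are invariant under automorphisms of $\bB$.
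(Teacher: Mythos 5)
Your proof is correct and is exactly the argument the paper intends: the paper leaves the proposition as ``an easy consequence of Theorem~\ref{thm:ryll}'', and the intended route is precisely your observation that $\Aut(\bB) \subseteq \Aut(\bA)$, so the orbit counts for the reduct cannot increase, and Ryll-Nardzewski applies in both directions. Your extra care about the finite/countable case distinction is a reasonable (if routine) addition.
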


First-order reducts of homogeneous structures, on the other hand, need not be homogeneous. 
An example of an $\omega$-categorical structure which is not homogeneous is
the $\omega$-categorical representation of the left linear point algebra given in~\cite{Bodirsky} (see Example~\ref{expl:branching-time}). 
Note that every $\omega$-categorical structure $\bB$, and more generally every structure with finitely many orbitals,  gives rise to a finite relation algebra, namely the relation algebra associated to the unions of orbitals of $\bB$ (see~\cite{Qualitative-Survey});
we refer to this relation algebra as the
\emph{orbital relation algebra} of $\bB$. 

We first present a condition that implies that
an $\omega$-categorical structure has
an NP-hard constraint satisfaction problem (Section~\ref{sect:original-conj}). 
The tractability conjecture says that 
every reduct of a finitely bounded homogeneous structure that does not satisfy this condition is NP-complete. 
We then present an equivalent characterisation of the condition due to Barto and Pinsker (Section~\ref{sect:BP}), 
and then yet another condition due to Barto, Opr\v{s}al, and Pinsker, which was later shown to be equivalent (Section~\ref{sect:wonderland}). 

\subsection{The original formulation of the conjecture}
\label{sect:original-conj}
Let $\bB$ be an $\omega$-categorical structure. Then $\bB$ 
is called  
\begin{itemize}
\item a \emph{core} if all endomorphisms
of $\bB$ 
(i.e., homomorphisms from $\bB$ to $\bB$)
are embeddings (i.e., are injective and also preserve the complement of each relation). 
\item \emph{model complete} if all
self-embeddings of $\bB$ are elementary, i.e., preserve all first-order formulas. 
\end{itemize}

Clearly, if $\bB$ is a representation of 
a finite relation algebra $\fA$, then
$\bB$ is a core. However, not all representations of finite relation algebras are model complete. A simple example is the orbital relation algebra of the structure $({\mathbb Q}_0^+;<)$ where 
${\mathbb Q}_0^+$ denotes the non-negative rationals: its representation 
with domain ${\mathbb Q}_0^+$ has self-embeddings that do not preserve the orbital $\{(0,0)\}$. 

Let $\tau$ be a relational signature. 
A $\tau$-formula is called \emph{primitive positive} if it is of the form 
$\exists x_1,\dots,x_n (\psi_1 \wedge \cdots \wedge \psi_m)$ where
$\psi_i$ is of the form $y_1 = y_2$ or of the form $R(y_1,\dots,y_k)$ for $R \in \tau$ of arity $k$. The variables $y_1,\dots,y_k$ can be free or from $x_1,\dots,x_n$. Clearly, primitive positive formulas are preserved by homomorphisms. 

\begin{theorem}[\cite{Cores-journal,BodHilsMartin}]
\label{thm:mc-core}
Every $\omega$-categorical structure $\bB$ is homomorphically equivalent to a model-complete core $\bC$, which is unique up to isomorphism, and again $\omega$-categorical. All orbits of $k$-tuples are primitive positive definable in $\bC$. 
\end{theorem}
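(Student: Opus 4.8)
The plan is to build the model-complete core $\bC$ as a structure whose automorphism group is a canonical dense subgroup of $\Aut(\bB)$, obtained by closing off $\bB$ under a ``greedy'' sequence of self-embeddings that progressively collapse the image. First I would pass to a core: since $\bB$ is $\omega$-categorical, consider the inclusion-minimal retracts of $\bB$, i.e.\ the substructures $\bB'$ of $\bB$ such that there is a homomorphism $\bB \to \bB'$ and every endomorphism of $\bB'$ is an embedding. The key point here is an $\omega$-categoricity/compactness argument: a decreasing chain of ranges of endomorphisms stabilises up to isomorphism because there are only finitely many orbits of $n$-tuples for each $n$, so the ``type spectrum'' of a substructure can strictly shrink only finitely often. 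This yields an $\omega$-categorical core $\bB'$ homomorphically equivalent to $\bB$, and uniqueness of the core up to isomorphism follows because any two cores homomorphically equivalent to $\bB$ admit homomorphisms in both directions, and the composite endomorphisms are embeddings, hence (by $\omega$-categoricity and a back-and-forth argument) automorphisms.

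Next I would make the core model complete. Working inside the core $\bB'$, I would consider the structure $\bC$ whose relations are exactly the sets definable in $\bB'$ by formulas that are both preserved by self-embeddings and whose complements are also so preserved, or, more concretely, I would take a suitable first-order reduct/expansion so that the self-embedding monoid becomes the topological closure of $\Aut(\bB')$. Concretely: $\bC$ has the same domain as $\bB'$, and for each orbit $O$ of an $n$-tuple under $\Aut(\bB')$ one introduces a relation symbol denoting $O$. The resulting structure is homogeneous (any isomorphism between finite substructures preserves orbits, hence extends to an automorphism by $\omega$-categoricity), therefore $\omega$-categorical, and every self-embedding of $\bC$ preserves all orbits, hence is elementary; so $\bC$ is model complete. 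One checks $\bC$ is still a core and still homomorphically equivalent to $\bB$ because $\Aut(\bC) = \Aut(\bB')$ as permutation groups, so the homomorphism relations are unchanged. The final clause, that all orbits of $k$-tuples are primitive positive definable in $\bC$, then follows from a separate lemma: in an $\omega$-categorical model-complete core, every orbit is the set of solutions of a primitive positive formula, proved by expressing the orbit as a finite intersection of ``non-equalities to the other orbits'', each of which, by model completeness and a compactness argument, can be captured existentially.

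The main obstacle, and the step I would spend the most care on, is the stabilisation argument that produces the core: one must argue that iterating ``take the image of a non-embedding endomorphism'' terminates after finitely many steps up to isomorphism. The right invariant is the set of $\Aut(\bB)$-orbits of tuples that are realised in the current substructure; a non-embedding endomorphism strictly decreases this set (it either identifies two points of different types or drops some negated atomic type), and by $\omega$-categoricity this set is finite, so the process halts. Making this precise — in particular handling the interplay between orbits of the ambient structure and orbits of the substructure, and arguing that the limiting substructure is itself $\omega$-categorical (e.g.\ because it is a substructure preserved by a group with finitely many orbits, or via a direct Ryll-Nardzewski count) — is where the real work lies; the rest is a fairly standard back-and-forth plus the primitive-positive-definability lemma, which I would cite or prove by the compactness argument sketched above.
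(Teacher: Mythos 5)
The paper does not prove Theorem~\ref{thm:mc-core}; it is quoted from \cite{Cores-journal,BodHilsMartin}, so your sketch has to be judged against the arguments in those papers. Measured that way, the overall shape (shrink the structure via endomorphisms, control everything through orbits, then derive primitive positive definability from model completeness) is the right one, but each of your three steps has a genuine gap. The most serious is the termination argument for producing the core. Your proposed invariant --- the set of $\Aut(\bB)$-orbits realised in the current image --- is finite only for each fixed arity $n$; across all arities it is countably infinite, so ``strictly decreases, hence halts'' does not follow. Worse, a non-injective endomorphism need not decrease it at all: a non-injective monotone self-map of $(\mathbb{Q};\leq)$ identifies points of the same type while its image still realises every orbit. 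The actual proofs need a K\"onig's-lemma/compactness argument to find a single endomorphism that is simultaneously orbit-minimal in every arity, and even then the core is \emph{not} in general obtained as an induced substructure of $\bB$ (induced images of endomorphisms of $\omega$-categorical structures need not be $\omega$-categorical), which is precisely why the construction in the literature works with the theory, or with closed transformation monoids, rather than with a descending chain of retracts.

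The second step also fails as stated: expanding the core by a relation symbol for every orbit changes the signature, so the resulting structure is no longer homomorphically equivalent to $\bB$ and cannot be the $\bC$ of the theorem; and taking the reduct back to the original signature destroys the model completeness you just manufactured. (That a core need not already be model complete is illustrated in this very paper by the representation built from $(\mathbb{Q}_0^+;<)$, which is a core with a non-elementary self-embedding.) Finally, for the last clause, model completeness gives you that the \emph{complement} of an orbit is existentially definable, but an existential formula is far from primitive positive; the real argument first shows that an orbit, being first-order definable and preserved by all endomorphisms of a model-complete core, is existentially \emph{positively} definable, and then uses transitivity of $\Aut(\bC)$ on the orbit to discard all but one disjunct of the resulting disjunction of primitive positive formulas. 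Your ``intersection of non-equalities to the other orbits'' does not engage with either of these reductions.
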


The (up to isomorphism unique) structure $\bC$
from Theorem~\ref{thm:mc-core} is called \emph{the model-complete core of $\bB$.}
Let $\bB$ and $\bA$ be structures, let
$D \subseteq B^n$, and let $I \colon D \to A$ be a surjection. Then $I$ is called a \emph{primitive positive interpretation}
if the pre-image under $I$ of $A$, of the equality relation $=_A$ on $A$, and of all relations of $\bA$ is primitive positive definable in $\bA$. 
In this case we also say that \emph{$\bB$ interprets $\bA$ primitively positively}. 
The complete graph with three vertices (but without loops) is denoted by $K_3$. 

\begin{theorem}[\cite{BodirskySurvey}]
\label{thm:hardness}
Let $\bB$ be an $\omega$-categorical structure. If the model-complete core of $\bB$ has an expansion by finitely many constants so that the resulting structure interprets $K_3$ primitively positively, then $\Csp(\bB)$ is NP-hard. 
\end{theorem}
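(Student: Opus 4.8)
The plan is to exhibit a chain of polynomial-time reductions ending in a problem that is classically known to be NP-complete, namely $\Csp(K_3)$ (i.e.\ graph $3$-colourability). Concretely, I would show
$$\Csp(K_3) \;\leq\; \Csp(\bC,c_1,\dots,c_k) \;\leq\; \Csp(\bC) \;=\; \Csp(\bB),$$
where $\bC$ is the model-complete core of $\bB$, the $c_i$ are the finitely many constants from the hypothesis, and $\leq$ denotes polynomial-time reducibility. The rightmost equality is immediate: by Theorem~\ref{thm:mc-core} the structure $\bB$ is homomorphically equivalent to $\bC$, and two homomorphically equivalent structures (over the same signature) accept exactly the same finite structures, so $\Csp(\bB)$ and $\Csp(\bC)$ are literally the same decision problem.

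The first genuine step is to show that adjoining finitely many constants does not increase the complexity, i.e.\ $\Csp(\bC,c_1,\dots,c_k)$ reduces in polynomial time to $\Csp(\bC)$. Given an instance $\bE$ of $\Csp(\bC,c_1,\dots,c_k)$, I would introduce fresh variables $y_1,\dots,y_k$, replace every use of the constant symbol for $c_i$ by $y_i$, and add to the instance the conjuncts of a primitive positive formula that defines, in $\bC$, the orbit of $(c_1,\dots,c_k)$ under $\Aut(\bC)$. Such a formula exists precisely because $\bC$ is a model-complete core: the last sentence of Theorem~\ref{thm:mc-core} guarantees that all orbits of $k$-tuples are primitive positive definable in $\bC$. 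If the resulting $A$-structure maps homomorphically to $\bC$, then $(y_1,\dots,y_k)$ is sent into the orbit of $(c_1,\dots,c_k)$, so post-composing with a suitable automorphism of $\bC$ produces a homomorphism $\bE \to (\bC,c_1,\dots,c_k)$; the converse direction is trivial. Since the pp-definition of this fixed orbit has constant size, the transformation is computable in polynomial (indeed logarithmic) space.

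The second step is the standard fact that a primitive positive interpretation yields a polynomial-time reduction in the corresponding direction on CSPs: if a structure $\bD$ interprets $\bA$ primitively positively with dimension $d$, then $\Csp(\bA) \leq \Csp(\bD)$. The reduction replaces each variable of the input by a $d$-tuple of variables, translates each atomic constraint $R(\dots)$ via the pp-formula defining the preimage $I^{-1}(R^{\bA})$, adds the pp-formula defining the domain $D \subseteq D(\bD)^d$, and encodes the preimage of $=_{\bA}$ by the pp-formula defining $I^{-1}({=_{\bA}})$; correctness follows by applying the coordinate map $I$ and using that pp-formulas are preserved by homomorphisms. Applying this with $\bD = (\bC,c_1,\dots,c_k)$ and $\bA = K_3$ — legitimate by hypothesis — and combining with the previous step and the NP-completeness of $3$-colourability gives that $\Csp(\bB)$ is NP-hard.

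The two reduction lemmas (constant elimination and pp-interpretations) are routine but somewhat technical; the one conceptually essential point, and the place where the hypothesis is really used, is that constant elimination works \emph{only} for model-complete cores. It is exactly in that setting that orbits of tuples become primitive positive definable, which is what allows the finitely many named elements to be ``pinned down'' by a finite primitive positive gadget rather than by a first-order condition that one could not fold into a CSP instance.
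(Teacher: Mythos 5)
Your argument is correct and is exactly the standard proof of this result (the paper itself only cites \cite{BodirskySurvey} and gives no proof): the chain $\Csp(K_3) \leq \Csp(\bC,c_1,\dots,c_k) \leq \Csp(\bC) = \Csp(\bB)$, with constant elimination justified by the primitive positive definability of orbits of $k$-tuples in the $\omega$-categorical model-complete core (Theorem~\ref{thm:mc-core}) and the pp-interpretation step giving the reduction from $3$-colourability. You also correctly identify where the model-complete-core hypothesis is genuinely used, and you implicitly use the right reading of the interpretation definition (preimages pp-definable in the interpreting structure), so nothing further is needed.
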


We can now state the infinite-domain tractability conjecture. 

\begin{conjecture}
Let $\bB$ be a first-order reduct of a
finitely bounded homogeneous structure.
If $\bB$ does not satisfy the condition from Theorem~\ref{thm:hardness} then 
$\Csp(\bB)$ is in P. 
\end{conjecture}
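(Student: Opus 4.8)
Since the hardness direction of such a dichotomy is exactly Theorem~\ref{thm:hardness}, a proof of the conjecture amounts to the tractability direction, and the plan for that would be the following. Let $\bB$ be a first-order reduct of a finitely bounded homogeneous structure and assume that $\bB$ does \emph{not} satisfy the condition from Theorem~\ref{thm:hardness}. By Theorem~\ref{thm:mc-core}, $\bB$ is homomorphically equivalent to its model-complete core $\bC$, which is again $\omega$-categorical, so $\Csp(\bB)=\Csp(\bC)$ and we may work with $\bC$. The hypothesis says that no expansion of $\bC$ by finitely many constants primitively positively interprets $K_3$; by the Barto--Pinsker theorem and the equivalent ``wonderland''-type reformulations (Sections~\ref{sect:BP} and~\ref{sect:wonderland}, \cite{BartoPinskerDichotomy,BKOPP,wonderland}), this is equivalent to a \emph{positive} statement about the polymorphism clone $\operatorname{Pol}(\bC)$ (with the constants added): it satisfies a non-trivial system of pseudo-identities, for instance a pseudo-Siggers identity modulo $\Aut(\bC)$. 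The first step is to fix such a polymorphism and make it the engine of a polynomial-time algorithm for $\Csp(\bC)$.

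The second step is to make the polymorphisms \emph{canonical}. One would pass to a Ramsey expansion of the finitely bounded homogeneous structure underlying $\bC$ and apply the canonisation technique: up to a reduction on small parts of the input, every polymorphism then behaves like a map between the finitely many orbits of tuples, so the infinite clone projects onto a finite-domain algebra on the orbits of $\bC$ (an ``orbital'' algebra in the spirit of the orbital relation algebra). This finite algebra still inherits a Siggers-type, equivalently weak near-unanimity, pseudo-identity, so the Bulatov--Zhuk dichotomy \cite{BulatovFVConjecture,ZhukFVConjecture} applies to it, and one splits into cases according to its structure. If no ``affine'' congruence quotient occurs, one aims to show that $\Csp(\bC)$ has \emph{bounded width}, i.e.\ is solved by a local-consistency (path-consistency-like) procedure, by lifting the Barto--Kozik criterion \cite{BoundedWidth} and the accompanying J\'onsson-type pseudo-identities from the orbit algebra back to $\bC$. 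If affine behaviour is present, local consistency is provably insufficient, and one must add a ``Gaussian-elimination'' layer on top of it, generalising the finite-domain algorithms of Bulatov and Zhuk to the $\omega$-categorical setting and working with congruences and ``strong subalgebras'' defined on orbits rather than on single elements.

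The main obstacle — and the reason the conjecture is still open — is that each ingredient of the finite-domain proof lacks a general $\omega$-categorical analogue. First, it is not known that every finitely bounded homogeneous structure has a homogeneous Ramsey expansion with finite relational signature (the Ramsey expansion conjecture), so the canonisation step is currently available only in special cases. Second, even granted canonisation, the recursions in the Bulatov and Zhuk algorithms exploit that subalgebras generated by few elements are small and that there are, up to isomorphism, only finitely many of them; over an infinite domain one must instead control the orbit structure of primitively positively definable sets, and it is unclear that the relevant lattices of ``strong subalgebras'' are well-behaved enough to drive the recursion. A more promising recent route is the \emph{smooth approximations} method \cite{Bodirsky-Mottet}, which approximates $\bC$ by finite structures compatible with $\Aut(\bC)$ and transfers both hardness and algorithms across the approximation; pushing it from the fragments where it currently works (first-order reducts of suitably tame Ramsey structures) to all finitely bounded homogeneous structures is, in my view, where the decisive difficulty lies.
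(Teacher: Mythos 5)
The statement you were given is the infinite-domain tractability conjecture, which the paper states as a \emph{conjecture} and does not prove; no proof is known. There is therefore nothing in the paper to compare your argument against, and your text is not a proof either --- to your credit, you say so explicitly (``the reason the conjecture is still open''). What you have written is a research programme: reduce to the model-complete core via Theorem~\ref{thm:mc-core} (correct, and this does give $\Csp(\bB)=\Csp(\bC)$), convert the non-interpretability hypothesis into a positive pseudo-Siggers condition via the Barto--Pinsker theorem and the equivalences of Sections~\ref{sect:BP} and~\ref{sect:wonderland} (also correct), then canonise polymorphisms over a Ramsey expansion and transfer the Bulatov--Zhuk finite-domain machinery to an algebra on orbits. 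That last part is exactly where the conjecture is open, and your own list of obstacles (existence of finitely bounded homogeneous Ramsey expansions, the absence of an analogue of the subalgebra recursion over infinite domains, the limited current reach of the smooth-approximations method) identifies the gaps honestly.

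Concretely: everything in your first paragraph is a sound and standard reduction; everything from ``one aims to show'' and ``one must add'' onward is unproved and cannot be filled in by the results cited in the paper. The paper itself only records that the conjecture has been \emph{verified in special cases} and that it would imply the RBCP for relation algebras with normal representations (Corollary~\ref{cor:normal}); it offers no general argument. So as a proof your submission has an irreparable gap --- the entire tractability direction --- but as a description of the state of the art and of where the decisive difficulty lies, it is accurate. One minor point: the smooth-approximations technique is due to Mottet and Pinsker and is not what the reference~\cite{Bodirsky-Mottet} in this paper establishes; that reference is cited here only as one of the special cases in which the conjecture has been confirmed.
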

This conjecture has been verified in numerous special cases (see, for instance, the articles~\cite{tcsps,BMPP16,posetCSP16,Phylo-Complexity,MMSNP,Bodirsky-Mottet}), including the class of finite-structures~\cite{BulatovFVConjecture,ZhukFVConjecture}. 

\subsection{The theorem of Barto and Pinsker}
\label{sect:BP}
The tractability conjecture has a fundamentally different, but equivalent formulation: instead of the \emph{non-existence} of a hardness-condition, we require the \emph{existence} of a polymorphism satisfying a certain identity; the concept of polymorphisms
is fundamental to the resolution of the Feder-Vardi conjecture in both~\cite{BulatovFVConjecture} and
\cite{ZhukFVConjecture}. 

\begin{definition}
A \emph{polymorphism} of a structure $\bB$ is a homomorphism from $\bB^k$ to $\bB$, for some $k \in {\mathbb N}$. 
We write $\Pol(\bB)$ for the set of all polymorphisms of $\bB$. 
\end{definition}

An operation $f \colon B^6 \to B$ is called 
\begin{itemize}
\item \emph{Siggers} if it satisfies
$$f(x,y,x,z,y,z) = f(z,z,y,y,x,x)$$
for all $x,y,z \in B$; 
\item \emph{pseudo-Siggers modulo $e_1,e_2 \colon B \to B$} 
if $$e_1(f(x,y,x,z,y,z)) = e_2(f(z,z,y,y,x,x))$$
for all $x,y,z \in B$. 
\end{itemize}

\begin{theorem}[\cite{BartoPinskerDichotomy}]
Let $\bB$ be an $\omega$-categorical model-complete core. Then either
\begin{itemize}
\item $\bB$ can be expanded by finitely many constants so that the resulting structure interprets 
$K_3$ 
primitively positively, or
\item $\bB$ has a pseudo-Siggers polymorphism modulo endomorphisms of $\bB$. 
\end{itemize}
\end{theorem}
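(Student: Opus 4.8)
The plan is to deduce this dichotomy from the corresponding dichotomy for \emph{finite} idempotent algebras --- the Siggers term theorem --- by way of two transfer steps: a Birkhoff-type translation between primitive positive interpretability and homomorphisms of polymorphism clones, and a compactness argument that lifts the finite result up to the $\omega$-categorical structure $\bB$. First I would pass to the language of clones: by the topological Birkhoff theorem for $\omega$-categorical structures, together with its refinement to minor-preserving maps, the first alternative --- that $\bB$ expanded by finitely many constants interprets $K_3$ primitively positively --- is equivalent to the existence of a \emph{uniformly continuous} minor-preserving map (minion homomorphism) from $\Pol(\bB)$ to the clone $\mathscr P$ of projections. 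Expanding by constants corresponds to replacing $\Aut(\bB)$ by the pointwise stabiliser of a finite tuple; and since $\bB$ is a model-complete core, its endomorphism monoid $\mathrm{End}(\bB)$ is exactly the pointwise closure $\overline{\Aut(\bB)}$, which is what makes this translation behave well. It therefore suffices to prove: if $\Pol(\bB)$ admits no uniformly continuous minion homomorphism to $\mathscr P$, then there are a $6$-ary $f \in \Pol(\bB)$ and $e_1,e_2 \in \mathrm{End}(\bB)$ with $e_1(f(x,y,x,z,y,z)) = e_2(f(z,z,y,y,x,x))$ for all $x,y,z \in B$.

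To establish this I would argue locally. As $\bB$ is $\omega$-categorical, $G := \Aut(\bB)$ is oligomorphic, so for every $n$ there are only finitely many $G$-orbits of $n$-tuples. To each finite $F \subseteq B$ one associates a finite idempotent algebra $\mathbf A_F$, built from the restrictions of the polymorphisms of $\bB$ to a suitable finite $G$-invariant enlargement of $F$ (passing to orbit-representatives where necessary), with the points of $F$ named by constants. A K\"{o}nig's-lemma / inverse-limit argument then shows that if every $\mathbf A_F$ admitted a minion homomorphism to $\mathscr P$, these could be assembled along the directed system of the $\mathbf A_F$ into a single \emph{uniformly continuous} minion homomorphism $\Pol(\bB) \to \mathscr P$, contradicting the hypothesis; hence some $\mathbf A_F$ has no minion homomorphism to $\mathscr P$. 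By the Siggers term theorem for finite algebras, $\mathbf A_F$ then carries a term operation obeying the Siggers identity, and this pulls back to a $6$-ary polymorphism of $\bB$ satisfying the Siggers identity \emph{on $F$, up to precomposing its two sides with elements of $G$}.

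It remains to take a limit over $F$. The sets $S_F$ of $6$-ary polymorphisms of $\bB$ that are ``Siggers on $F$ modulo $G$'' form a downward-directed family of nonempty sets that are closed in the space of $6$-ary operations on $B$; using oligomorphicity of $G$ (finitely many orbits of $6$-tuples) and the action of $\overline{G}$ on this space, a compactness argument --- if necessary after replacing operations by canonical ones with respect to a homogeneous Ramsey expansion of $\bB$ --- produces $f \in \bigcap_F S_F$. Unpacking what ``Siggers modulo $G$ on all finite $F$ simultaneously'' amounts to then yields $e_1, e_2 \in \overline{G}$ with $e_1(f(x,y,x,z,y,z)) = e_2(f(z,z,y,y,x,x))$ for all $x,y,z$; and $\overline{G} = \mathrm{End}(\bB)$ since $\bB$ is a model-complete core, so $f$ is pseudo-Siggers modulo endomorphisms of $\bB$, as desired.

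The main obstacle is the interface between the two transfer steps. In the first, the inverse system of ``candidate minion homomorphisms to $\mathscr P$'' must be organised so that the compactness argument actually delivers a map that is \emph{uniformly continuous} --- uniform continuity is exactly what the Birkhoff-type translation consumes, and it is the property most readily destroyed by a careless limit. In the last step, the local Siggers identities hold only modulo automorphisms of the finite pieces, and these do not cohere into automorphisms of $\bB$: the best one can salvage in the limit are endomorphisms in $\overline{G}$, and this --- rather than any genuine loss of strength --- is the source of the ``pseudo'' and ``modulo endomorphisms'' qualifiers in the statement. Both the finite Siggers term theorem and the equality $\mathrm{End}(\bB) = \overline{\Aut(\bB)}$ for model-complete $\omega$-categorical cores are invoked as black boxes.
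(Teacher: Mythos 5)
The paper does not prove this theorem; it quotes it from Barto and Pinsker, so your proposal has to be judged on its own merits. It contains two genuine gaps.

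First, your opening reduction is unsound in the stated generality. You replace the first alternative (``$\bB$ expanded by finitely many constants pp-interprets $K_3$'') by ``$\Pol(\bB)$ admits a uniformly continuous minor-preserving map to the clone of projections'' and claim these are equivalent. Only one implication holds for arbitrary $\omega$-categorical structures: a pp-interpretation with parameters yields such a map. The converse fails --- the paper itself records, right after Theorem~\ref{thm:wonderland} and citing~\cite{BKOPP}, that there are $\omega$-categorical structures satisfying the uniformly continuous minor-preserving-map condition but not the pp-interpretation condition; the equivalence is only known for reducts of homogeneous structures in a finite relational language, whereas the theorem here is about arbitrary $\omega$-categorical model-complete cores. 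And the failing direction is exactly the one you need: to prove ``no pp-interpretation $\Rightarrow$ pseudo-Siggers'' it does not suffice to prove ``no uniformly continuous minion homomorphism $\Rightarrow$ pseudo-Siggers'', since the latter has a strictly stronger hypothesis. On the counterexamples of~\cite{BKOPP} your argument is vacuous while the theorem still demands a pseudo-Siggers polymorphism.

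Second, the final limit step is the actual content of the theorem, and you assert it rather than prove it. What the local finite orbit-algebras give you, for each finite $F$ and each triple $x,y,z \in F$, is an automorphism $\alpha_{x,y,z}$ (fixing $F$ pointwise) with $\alpha_{x,y,z}(f(x,y,x,z,y,z)) = f(z,z,y,y,x,x)$ --- a witness depending on the triple. A compactness argument over your family $S_F$ (which, note, lives in a non-compact space of operations on an infinite set, so compactness must itself be extracted from oligomorphicity) at best preserves this triple-wise statement in the limit; it does not by itself produce a single pair $e_1,e_2$ working uniformly for all $x,y,z$. Passing from ``pointwise modulo automorphisms'' to ``globally modulo two endomorphisms'' is the hard step of Barto and Pinsker's argument, and your fallback to canonical functions with respect to a Ramsey expansion cannot substitute for it, since an arbitrary $\omega$-categorical structure need not admit such an expansion. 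The ingredients you invoke as black boxes (the finite Siggers theorem, and $\mathrm{End}(\bB) = \overline{\Aut(\bB)}$ for $\omega$-categorical model-complete cores) are correctly stated.
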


\subsection{The wonderland conjecture}
\label{sect:wonderland}
A weaker condition that implies that 
an $\omega$-categorical structure has
an NP-hard CSP has been presented in
~\cite{wonderland}. For reducts of homogeneous structures with finite signature, however, the two conditions are equivalent~\cite{BKOPP}. 
Hence, we obtain yet another different but equivalent formulation of the tractability conjecture. The advantage of the new formulation is that it does not require that
the structure is a model-complete core. 

Let $\bB$ be a countable structure. 
A map $\mu \colon \Pol(\bB) \to \Pol(\bA)$ is called \emph{minor-preserving} if
for every $f \in \Pol(\bB)$ of arity $k$
and all $k$-ary projections $\pi_1,\dots,\pi_k$ we have $\mu(f) \circ (\pi_1,\dots,\pi_k) = \mu(f \circ (\pi_1,\dots,\pi_k))$ where $\circ$ denotes composition of functions. 
The set $\Pol(\bB)$ is equipped with a natural complete 
ultrametric $d$ (see, e.g.,~\cite{BodirskySchneiderTopological}). To define $d$, suppose that $B = {\mathbb N}$. For $f,g \in \Pol(\bB)$ we define
$d(f,g) = 1$ if $f$ and $g$ have different arity; otherwise, if both $f,g$ have arity $k \in {\mathbb N}$, then $$d(f,g) :=  2^{- \min\{n \in {\mathbb N} \mid \exists s \in \{1,\dots,n\}^k: f(s) \neq g(s)\}}.$$

\begin{theorem}[of~\cite{wonderland}]
\label{thm:wonderland}
Let $\bB$ be $\omega$-categorical. 
Suppose that $\Pol(\bB)$ has a uniformly continuous minor-preserving map to 
$\Pol(K_3)$. Then $\Csp(\bB)$ is NP-complete. 
\end{theorem}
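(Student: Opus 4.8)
The plan is to establish that $\Csp(\bB)$ is NP-hard; membership in NP is immediate whenever $\bB$ is additionally finitely bounded, which is the case relevant to this article (cf.\ Proposition~\ref{prop:fin-bound}), so the substance of the statement is the hardness. Since $\Csp(K_3)$ is graph $3$-colourability, and hence NP-complete, it suffices to produce a polynomial-time reduction from $\Csp(K_3)$ to $\Csp(\bB)$. First I would pass to the model-complete core: by Theorem~\ref{thm:mc-core}, $\bB$ is homomorphically equivalent to its model-complete core $\bC$, which is again $\omega$-categorical, and homomorphically equivalent structures have the same CSP, so it suffices to treat $\bC$. Moreover, if $g\colon\bC\to\bB$ and $h\colon\bB\to\bC$ witness the homomorphic equivalence, then $f\mapsto g\circ f\circ(h,\dots,h)$ is a uniformly continuous minor-preserving map $\Pol(\bC)\to\Pol(\bB)$, so composing it with the given map produces a uniformly continuous minor-preserving map $\xi\colon\Pol(\bC)\to\Pol(K_3)$.

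Next I would apply the theorem of Barto and Pinsker~\cite{BartoPinskerDichotomy} (Section~\ref{sect:BP}) to the $\omega$-categorical model-complete core $\bC$. In its first alternative, $\bC$ --- being the model-complete core of $\bB$ --- has an expansion by finitely many constants that interprets $K_3$ primitively positively, and then Theorem~\ref{thm:hardness} immediately yields that $\Csp(\bB)$ is NP-hard. Thus the entire difficulty is concentrated in ruling out the second alternative: one has to show that no uniformly continuous minor-preserving map $\Pol(\bC)\to\Pol(K_3)$ can exist once $\bC$ has a pseudo-Siggers polymorphism modulo endomorphisms. Equivalently --- and this is the point of view of~\cite{wonderland,BKOPP} --- such a $\xi$ must witness a primitive positive construction of $K_3$ from $\bC$, which in turn yields the sought reduction from $\Csp(K_3)$ to $\Csp(\bC)=\Csp(\bB)$.

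I expect this last step to be the main obstacle, and it is the one I would import from~\cite{wonderland} rather than reprove. The target side is benign: every polymorphism of $K_3$ is a projection up to post-composition with an automorphism of $K_3$, and a direct inspection of the two argument patterns $(x,y,x,z,y,z)$ and $(z,z,y,y,x,x)$ shows that no single coordinate equalises them for all $x,y,z$ --- even after post-composition with automorphisms --- so $\Pol(K_3)$ has no pseudo-Siggers operation, not even modulo automorphisms. One would therefore like to push the pseudo-Siggers identity of $\bC$ forward along $\xi$ to a pseudo-Siggers identity modulo automorphisms of $K_3$ and derive a contradiction. The subtlety is that $\xi$, being only minor-preserving, respects composition with projections but not the outer composition with the unary ``correction maps'' $e_1,e_2$ occurring in the pseudo-Siggers condition; bridging this gap is precisely where uniform continuity and the model-complete core property are used, via approximating $e_1,e_2$ by automorphisms of $\bC$ (possible because the endomorphisms of a model-complete core lie in the topological closure of its automorphism group) and then passing to a limit using the uniform continuity of $\xi$ together with the finiteness of $\Pol(K_3)$. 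Combining the two alternatives, $\Csp(\bB)$ is NP-hard, and with NP-membership under finite boundedness it is NP-complete.
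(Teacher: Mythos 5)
The paper does not prove this theorem --- it is imported from~\cite{wonderland} --- so I can only assess your sketch on its own terms, and there is a genuine structural flaw in it. Your preliminary steps are fine: passing to the model-complete core $\bC$ and transporting the minor-preserving map along the homomorphic equivalence is correct, as is the observation that $\Pol(K_3)$ consists of projections composed with automorphisms and therefore admits no pseudo-Siggers operation even modulo automorphisms. The problem is the pivot to the Barto--Pinsker dichotomy. You concentrate the whole proof in the claim that the second alternative (a pseudo-Siggers polymorphism of $\bC$ modulo endomorphisms) is incompatible with the existence of a uniformly continuous minor-preserving map $\xi \colon \Pol(\bC) \to \Pol(K_3)$. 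That incompatibility is false for general $\omega$-categorical structures: the paper itself records, citing~\cite{BKOPP}, that there are $\omega$-categorical structures satisfying the hypothesis of Theorem~\ref{thm:wonderland} but \emph{not} the condition of Theorem~\ref{thm:hardness}. For such a structure, Barto--Pinsker forces the model-complete core to have a pseudo-Siggers polymorphism modulo endomorphisms, while $\xi$ exists by hypothesis --- so both alternatives you try to play against each other hold simultaneously, and no contradiction is available. Correspondingly, the ``bridging'' step you describe (approximating $e_1,e_2$ by automorphisms and passing to a limit via uniform continuity of $\xi$) provably cannot be carried out in this generality; making it work is exactly the content of the equivalence theorem of~\cite{BKOPP}, which holds only for reducts of homogeneous structures with finite relational signature and is itself a hard result built on top of the wonderland machinery, so invoking it here would be circular.

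The actual argument of~\cite{wonderland} does not route through Barto--Pinsker at all (which in any case postdates it). It is the direct reduction you mention only in passing: uniform continuity of $\xi$, together with the finiteness of $K_3$, implies that the map is determined by its action on a finite part of $\bB$, from which one extracts a finite reflection of a pp-power of $\bB$ that is homomorphically equivalent to $K_3$; that is, $\bB$ pp-constructs $K_3$. Since pp-constructions yield polynomial-time reductions between the CSPs, $\Csp(K_3)$ reduces to $\Csp(\bB)$ and NP-hardness follows. (Your remark that only NP-hardness, not NP-membership, can follow from $\omega$-categoricity alone is correct; the ``complete'' in the statement tacitly assumes membership in NP.) If you want a salvageable sketch, you should replace the dichotomy case analysis by this pp-construction argument and cite the finite-target direction of the wonderland correspondence as the key imported lemma.
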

We mention that there are $\omega$-categorical structures where the condition from Theorem~\ref{thm:wonderland} applies, but not the condition from Theorem~\ref{thm:hardness}~\cite{BKOPP}. 

\begin{theorem}[of~\cite{BKOPP}]
If $\bB$ is a reduct of a homogeneous structure with finite relational signature, then the conditions given in Theorem~\ref{thm:hardness} and
in Theorem~\ref{thm:wonderland} are equivalent. 
\end{theorem}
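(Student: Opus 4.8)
The plan is to prove the two implications separately, writing $\bC$ for the model-complete core of $\bB$. Note first that a reduct $\bB$ of a homogeneous structure with finite relational signature is $\omega$-categorical — the ambient homogeneous structure is $\omega$-categorical by Theorem~\ref{thm:ryll}, and first-order reducts of $\omega$-categorical structures are $\omega$-categorical — so Theorems~\ref{thm:mc-core} and~\ref{thm:hardness} are available for $\bB$.

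\emph{From the condition of Theorem~\ref{thm:hardness} to that of Theorem~\ref{thm:wonderland}.} This direction is essentially formal: I would exhibit a uniformly continuous minor-preserving map $\Pol(\bB)\to\Pol(K_3)$ as a composition of three such maps, one for each ingredient of the hypothesis. (i) Since $\bB$ and $\bC$ are homomorphically equivalent, fix homomorphisms $u\colon\bB\to\bC$ and $v\colon\bC\to\bB$; then $f\mapsto u\circ f\circ(v,\dots,v)$ carries a $k$-ary polymorphism of $\bB$ to one of $\bC$, and this ``reflection'' is uniformly continuous and minor-preserving, yielding $\Pol(\bB)\to\Pol(\bC)$. (ii) There is a uniformly continuous minor-preserving map $\Pol(\bC)\to\Pol(\bC,c_1,\dots,c_n)$: one post-composes $f\in\Pol(\bC)$ with a suitable automorphism of $\bC$ so that the result fixes every $c_i$, such an automorphism existing because $x\mapsto f(x,\dots,x)$ is an endomorphism of the model-complete core $\bC$, hence preserves all first-order formulas, hence by $\omega$-categoricity agrees on any finite tuple with an automorphism. (Checking that this can be done uniformly continuously and compatibly with minors is the only non-formal point here, and is standard for $\omega$-categorical model-complete cores.) (iii) A primitive positive interpretation of $K_3$ in $(\bC,c_1,\dots,c_n)$ induces a uniformly continuous minor-preserving map $\Pol(\bC,c_1,\dots,c_n)\to\Pol(K_3)$, because polymorphism clones transport along primitive positive interpretations. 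Composing (i)--(iii), and using that uniformly continuous minor-preserving maps compose, gives the desired map.

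\emph{From the condition of Theorem~\ref{thm:wonderland} to that of Theorem~\ref{thm:hardness}.} This is the substantial direction. I would first reduce to the model-complete core: composing the given map $\Pol(\bB)\to\Pol(K_3)$ with the reflection $\Pol(\bC)\to\Pol(\bB)$ coming from homomorphic equivalence (as in (i), with $u,v$ interchanged) produces a uniformly continuous minor-preserving map $\Pol(\bC)\to\Pol(K_3)$. It then remains to show: \emph{if the model-complete core $\bC$ of a reduct of a homogeneous finite-signature structure admits a uniformly continuous minor-preserving map to $\Pol(K_3)$, then $\bC$ expanded by finitely many constants interprets $K_3$ primitively positively.} This is the inverse of the polymorphism--interpretation correspondence, and is where the real work of~\cite{BKOPP} lies; I would follow their approach. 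The $\omega$-categoricity of $\bC$ together with Theorem~\ref{thm:mc-core} (orbits of tuples of $\bC$ are primitive positive definable) provides the raw material for the interpretation: its domain will be a primitive positive definable subset of a power $C^d$ and its relations will be unions of orbits. Uniform continuity makes the minor-preserving map depend on only finitely much data; converting that finite data — in the presence of the structural hypothesis on $\bB$ — into an honest primitive positive interpretation of $K_3$ requires a compactness argument, and, where needed, passing to polymorphisms that are canonical with respect to the automorphism group of the ambient homogeneous structure, which is where finiteness of the signature is exploited. Finitely many constants are added so that, after fixing those elements, the primitive positive definable relations separate enough orbits to define a surjection onto the three vertices of $K_3$ and its edge relation.

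\emph{The main obstacle.} The crux is the final step of the second direction: producing a primitive positive interpretation of $K_3$ in $(\bC,c_1,\dots,c_n)$ from nothing more than the existence of a uniformly continuous minor-preserving map to $\Pol(K_3)$. This is genuinely non-formal — it is false for general $\omega$-categorical $\bC$, as remarked just before the theorem — and it is exactly here that the assumption that $\bB$ is a reduct of a homogeneous structure with \emph{finite} relational signature is essential, to pass from an infinite topological polymorphism clone to a finite interpreted structure. Everything else (homomorphic equivalence, transport along primitive positive interpretations, the addition of constants, closure of uniformly continuous minor-preserving maps under composition) is routine bookkeeping with $\omega$-categorical model-complete cores.
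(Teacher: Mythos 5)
This statement is imported: the paper attributes it to~\cite{BKOPP} and gives no proof of its own, so there is no internal argument to compare yours against. Judged on its own terms, your proposal is a correct and well-organised \emph{outline}, but it is not a proof. The first direction (from the condition of Theorem~\ref{thm:hardness} to that of Theorem~\ref{thm:wonderland}) is genuinely routine and your three-step factorisation $\Pol(\bB)\to\Pol(\bC)\to\Pol(\bC,c_1,\dots,c_n)\to\Pol(K_3)$ is the standard one; the reflection in (i), the constant-elimination in (ii) (where minor-preservation works because $f$ and its minors share the same diagonal, so one automorphism serves for all of them), and the transport along pp-interpretations in (iii) are all correct. Likewise the reduction of the converse to the model-complete core is fine.

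The gap is that the converse's key step --- that a uniformly continuous minor-preserving map $\Pol(\bC)\to\Pol(K_3)$ yields a primitive positive interpretation of $K_3$ in an expansion of $\bC$ by finitely many constants --- is exactly the theorem of~\cite{BKOPP}, and you explicitly defer to it (``I would follow their approach'') rather than prove it. Since, as you yourself note, this implication is \emph{false} for general $\omega$-categorical structures, everything hinges on how the hypothesis ``reduct of a homogeneous structure with finite relational signature'' is actually used, and your sketch only gestures at this (``compactness,'' ``canonical polymorphisms''). What is missing is the substance of the cited argument: in~\cite{BKOPP} the finite-signature/homogeneity assumption enters through a quantitative bound on the orbit growth of $\Aut(\bB)$, which is what lets one show that topology can be dropped (any minor-preserving map to projections can be replaced by a uniformly continuous one, and the absence of such a map forces a pseudo-Siggers polymorphism via the Barto--Pinsker theorem), whence the pp-interpretation with constants. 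Without reproducing that argument, or at least stating precisely which lemma of~\cite{BKOPP} carries the load and why its hypotheses are met here, the proposal proves only the easy half of the equivalence.
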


\section{Testing the Existence of Normal Representations}
\label{sect:testing-normal}
In this section we present an algorithm that tests whether a given finite relation algebra has a normal representation. This follows from a model-theoretic result that seems to be folklore, namely that testing the amalgamation property for a class of structures that has the JEP and a signature of maximal arity two which is given by a finite set of forbidden substructures is decidable. We are not aware of a proof of this in the literature. 

\begin{theorem}\label{thm:normal}
There is an algorithm that decides for a given finite relation algebra $\fA$ 
which has a fully universal square representation whether 
$\fA$ also has a normal representation. 
\end{theorem}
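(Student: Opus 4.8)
The plan is to reduce the problem to deciding whether the class $\mathcal C$ of atomic $\fA$-networks has the amalgamation property, and then to establish that this is decidable by bounding the size of a possible obstruction. Here $\mathcal C$ is viewed as a class of finite structures over the signature $A$ as in Section~\ref{sect:networks} (equivalently, $\mathcal C$ consists of the finite induced substructures of a fully universal square representation of $\fA$, each given by an atom-labelling of its edges subject to~(\ref{eq:atomic})). The class $\mathcal C$ is closed under induced substructures (an induced substructure of an atomic network is atomic) and has at most countably many members up to isomorphism, so it is an amalgamation class if and only if it has the amalgamation property. If it does, its \Fresse-limit is homogeneous with age $\mathcal C$, and, inspecting the bounds supplied by Proposition~\ref{prop:fin-bound} and using the hypothesis that $\fA$ has a fully universal square representation, this \Fresse-limit is a square fully universal, hence normal, representation of $\fA$. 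Conversely a normal representation of $\fA$ is homogeneous and, being square and fully universal, has age $\mathcal C$, so $\mathcal C$ has the amalgamation property. It therefore suffices to decide whether $\mathcal C$ has the amalgamation property.

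By (the proof of) Proposition~\ref{prop:fin-bound}, $\mathcal C=\Forb(\mathcal F)$ for a finite set $\mathcal F$ of structures of size at most three over a signature of arity at most two, and $\mathcal F$ is computable from the converse and composition tables of $\fA$. Since $\mathcal C$ is closed under induced substructures, it has the amalgamation property if and only if it has \emph{one-point amalgamation}: every amalgamation diagram $(\bB_1,\bB_2)$ with $|B_2\setminus A|=1$ has an amalgam in $\mathcal C$. (Given an arbitrary diagram, adjoin the elements of $B_2\setminus A$ to $\bB_1$ one at a time, using one-point amalgamation and substructure-closedness at each step.) So it remains to decide one-point amalgamation.

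Fix such a diagram $(\bB_1,\bB_2)$ over $\bA$ with $B_2=A\cup\{p\}$. An amalgam in $\mathcal C$ is precisely a choice, for each $v\in B_1\setminus A$, of the atom on the edge from $v$ to a new element $p'$ (the edges from $p'$ to $A$ being prescribed by $\bB_2$) such that no member of $\mathcal F$ embeds. As $\mathcal F$ has structures of size at most three over a binary signature, the only relevant subconfigurations are the triangles through $p'$, so this is a binary constraint satisfaction problem with variable set $B_1\setminus A$ and domain the set of atoms: the unary constraint on $v$ depends only on the atoms on the edges inside $A\cup\{v\}$ and from $p$ to $A$, and the binary constraint on $\{v,v'\}$ depends only on the atom on the edge $vv'$. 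Call the diagram an \emph{obstruction} if this problem has no solution. The plan is to prove that if an obstruction exists then one exists with $|B_1|$ bounded by a computable function of $\mathcal F$. If two-point amalgamation itself fails, a counting argument on the finitely many atoms already yields an obstruction of size at most $|A_0|+1$ for a bounded $A_0$; so we may assume two-point amalgamation holds, and then in any diagram each $v$ has, after enforcing the unary constraints, a nonempty domain $D_v$ drawn from the boundedly many subsets of the atom set. Take an obstruction with $|B_1|$ minimal, colour $v$ by $D_v$ and each edge inside $B_1\setminus A$ by its atom. If $|B_1\setminus A|$ exceeds the corresponding Ramsey number, there is a set $V'\subseteq B_1\setminus A$ monochromatic for both colourings; on $V'$ all variables share one domain $D^\ast$ and all pairs carry one binary constraint $R^\ast$, so unless $R^\ast$ has a loop an $(|D^\ast|+1)$-element subset of $V'$ is itself an obstruction with strictly fewer elements, contradicting minimality. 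Excluding the remaining case---that every monochromatic blob that can be extracted carries a reflexive constraint---requires a finer colouring, recording also how the partial solutions on $B_1\setminus\{v\}$ (which exist for each $v$ by minimality) behave near $v$; carrying this out and verifying that the bound obtained is effective is the main technical obstacle, and it is here that the hypothesis of arity at most two enters essentially (in higher arities the analogous decidability statement fails).

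Finally, letting $c$ be the resulting bound, the algorithm computes $\mathcal F$, enumerates up to isomorphism all one-point amalgamation diagrams with $|B_1|\le c$, and for each tests the finitely many candidate amalgams---structures obtained from $\bB_1$ by adjoining one element with the prescribed edges to $A$---for membership in $\Forb(\mathcal F)$, which only requires inspecting substructures of size at most three. It accepts if and only if every diagram has an amalgam in $\mathcal C$, which by the reductions above holds if and only if $\fA$ has a normal representation.
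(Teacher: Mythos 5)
Your overall architecture is right (normal representation $\Leftrightarrow$ amalgamation property for the class $\mathcal C$ of atomic networks, then bound the size of a minimal failing diagram), but the core of your argument has a genuine gap that you yourself flag: after reducing to one-point amalgamation diagrams $(\bB_1,\bB_2)$ with $B_2\setminus A=\{p'\}$, you must simultaneously choose the atoms on all edges from $p'$ to $B_1\setminus A$, which is a CSP with unboundedly many variables, and your Ramsey argument for bounding a minimal obstruction is left incomplete ("excluding the remaining case \dots is the main technical obstacle"). As written, no effective bound $c$ is ever produced, so the algorithm in your last paragraph is not defined. This is not a detail one can wave away: the whole content of the theorem is that such a bound exists and is computable.

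The missing idea is to reduce further, from one-point to \emph{two-point} amalgamation diagrams, i.e.\ diagrams with $|B_1\setminus A|=|B_2\setminus A|=1$. For a substructure-closed class of relational structures this reduction is the standard induction (amalgamate $(\bB_1,\bB_2)$ by repeatedly trading one new point from one side against one from the other, enlarging the base as you go), and it is exactly what the paper's double-maximality argument implements. Once both sides add a single point $p$ and $q$, an amalgam amounts to choosing the atom on the \emph{single} edge $pq$, so the CSP you were fighting disappears: for each of the $k$ atoms $a$ that fails as a label of $pq$ there is one witness $r_a$ in the base such that the triangle $p,r_a,q$ violates condition~(\ref{eq:atomic}), and restricting to $\{p,q\}\cup\{r_a\}$ yields a failing diagram with at most $k+2$ elements. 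You actually have this counting argument in your proposal, but you deploy it only as a side remark for the case "two-point amalgamation itself fails" without noticing that, by the reduction above, this case is the only one you need. With that observation your Ramsey machinery (and the unresolved reflexive-constraint case) can be deleted entirely, and the algorithm becomes: enumerate all two-point diagrams of size at most $k+2$ and check each for an amalgam in $\Forb(\mathcal F)$.
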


\begin{proof}
First observe that the class $\mathcal C$ 
of all atomic 
$\fA$-networks, viewed as $A$-structures, has the JEP: if $N_1$ and $N_2$ are atomic networks, then they are satisfiable in $\bB$ since $\bB$ is fully universal, 
and hence embed into $\bB$ when viewed as structures.  
Since $\bB$ is square the substructure of $\bB$ 
induced by the union of the images of $N_1$ and $N_2$ is an atomic network, too,
and it embeds $N_1$ and $N_2$. 

Let $k$ be the number of atoms of $\fA$. 
It clearly suffices to show the following claim, since the condition given there can be effectively checked exhaustively. 

{\bf Claim.} $\mathcal C$ has the AP if and only if all 2-point amalgamation diagrams of size at most $k+2$ amalgamate. 

So suppose that $D = (\bB_1,\bB_2)$
is an amalgamation diagram without amalgam. Let $\bB_1'$ be a maximal substructure of $\bB_1$ that contains $B_1 \cap B_2$ such that $(\bB_1',\bB_2)$ has an amalgam. 
Let $\bB_2'$ be a maximal substructure
of $\bB_2$ that contains $B_1 \cap B_2$ such that $(\bB_1',\bB_2')$ has an amalgam. Then $B_i \neq B_i'$ for some
$i \in \{1,2\}$; let $\bC_1$ be a substructure of $\bB_i$ that extends $\bB_i'$ by one element, and let $\bC_2 := \bB_{3-i}'$. 
Then $(\bC_1,\bC_2)$ is a 2-point amalgamation diagram without an amalgam. Let $C_0 := C_1 \cap C_2$.
Let $C_1 \setminus C_0 = \{p\}$ and $C_2 \setminus C_0 = \{q\}$. 
For each $a \in A$ there exists an element $r_a \in C_0$ such that the network $(\{r,p,q\},f)$
with $f(p,q) = a$, $f(p,r) = f^{\bB_1}(p,r)$,
$f(r,q) = f^{\bB_2}(r,q)$ fails the atomicity property~(\ref{eq:atomic}). 
Let $\bC_1'$ be the substructure of $\bC_1$ induced by $\{p\} \cup \{r_a \mid a \in A\}$ and $\bA_1'$ be the substructure of $\bC_2$ induced by $\{q\} \cup \{r_a \mid a \in A\}$. Then the amalgamation diagram 
$(\bC_1',\bC_2')$ has no amalgam, and
has size at most $k+2$. 
\end{proof}


\ignore{
\section{Existential homogenization}
We present a new class of finite relation algebras which properly extends the class of relation algebras with a
normal representation, and where 
the infinite-domain dichotomy conjecture still implies the RBCP. 

An \emph{existential homogenisation}
of a structure $\bB$ is a homogeneous expansion $\bC$ of
$\bB$ by relations with an \emph{existential definition} in $\bB$, i.e., the relation is definable by a formula of the form $\exists x_1,\dots,x_n. \phi$ where $\phi$ is quantifier-free. We call a finite relation algebra $\fA$ \emph{regular} if 
it has a fully universal square representation which has an existential homogenisation with finite relational signature. Clearly, every normal representation is regular, and every regular representation is $\omega$-categorical. 
The following proposition shows that 
the tractability conjecture also answers the RBCP for finite relation algebras
with a regular representation. 

\begin{proposition}
Let $\fA$ be a finite relation algebra
with a regular representation $\bB$. 
Then $\bB$ is a reduct of a finitely bounded homogeneous structure. 
\end{proposition}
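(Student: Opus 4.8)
The plan is to prove the stronger statement that the existential homogenisation of $\bB$ is itself finitely bounded; since that structure is homogeneous and $\bB$ is a reduct of it, the proposition follows. So fix a homogeneous expansion $\bC$ of $\bB$ by finitely many relations that are existentially definable in $\bB$, of finite relational signature $\rho \supseteq A$ --- this exists because $\bB$ is regular. Then $\bC$ is $\omega$-categorical by Theorem~\ref{thm:ryll}, and $\Aut(\bC) = \Aut(\bB)$ because the new relations are first-order definable in $\bB$. By Proposition~\ref{prop:fin-bound} the structure $\bB$ is finitely bounded; moreover, since $\bB$ is fully universal and square, $\Age(\bB)$ is precisely the class of atomic $A$-structures, so $\Age(\bB) = \Forb(\mathcal F_0)$ for a finite set $\mathcal F_0$ of $A$-structures of size at most three, and $\Age(\bB)$ is closed under adjoining one new element carrying any atoms to the existing elements for which all triples satisfy~(\ref{eq:atomic}).

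Write each new relation $R_i$, of arity $n_i$, as $\exists x_1,\dots,x_{\ell_i}\, \phi_i(\bar y,\bar x)$ with $\phi_i$ quantifier-free over $A$; since the atoms of $\fA$ partition $B^2$, over $\bB$ we may take $\phi_i$ to be a disjunction of complete atomic types on $n_i+\ell_i$ variables. Let $N \geq 3$ be at least $\max_i(n_i+\ell_i)$ --- possibly a little larger, as the assembly below may require --- and let $\mathcal F$ be the finite set of $\rho$-structures of size at most $N$ that do not embed into $\bC$. I claim $\Age(\bC) = \Forb(\mathcal F)$. The inclusion $\subseteq$ is immediate, so let $\bD$ be a finite $\rho$-structure whose substructures of size at most $N$ all embed into $\bC$, and let us build an embedding $\bD \hookrightarrow \bC$. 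First, $\bD|_A \in \Age(\bB)$, since membership of an $A$-structure in $\Age(\bB)$ is a condition on its substructures of size at most three. It remains to refine an embedding $\bD|_A \hookrightarrow \bB$ to one respecting every $R_i$. Whenever $\bar d \in R_i^\bD$, the structure $\bD[\bar d]$ embeds into $\bC$, so $\bB$ satisfies $\exists \bar x\, \phi_i$ of the image of $\bar d$ and hence contains a witnessing block of at most $\ell_i$ elements; by full universality and the closure property above, a copy of such a block can be adjoined to $\bD|_A$ --- one per such tuple --- without leaving $\Age(\bB)$. Whenever $\bar d \notin R_i^\bD$, homogeneity of $\bC$ assigns $\bD[\bar d]$ a unique orbit in $\bC$ on which $\exists \bar x\, \phi_i$ fails, and the embedding must send $\bar d$ into that orbit. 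Finally one assembles these local requirements into a single embedding by adding the elements of $D$ one at a time, using amalgamation in $\Age(\bC)$ together with the hypothesis on substructures of size at most $N$; here one exploits that every relation of $\rho$ involves at most $N$ elements, so each step only inspects a bounded portion of the structure built so far.

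The hard part is precisely this last assembly: proving that membership in $\Age(\bC)$ is determined by substructures of bounded size, equivalently that $\bC$ is finitely bounded. This does not follow from $\bC$ being homogeneous with a finite signature --- there exist homogeneous graphs with finite relational signature that are not finitely bounded --- so the argument must genuinely use both that $\bB$ is a fully universal square representation (whence $\Age(\bB)$ is governed by three-element conditions and is closed under point extension) and that the new relations are existentially definable over $A$ (whence their positive content is witnessed inside a bounded block, while their negative content is controlled by an orbit that homogeneity of $\bC$ reads off a bounded set). Fixing a workable value of $N$ and making this assembly precise is the technical core.
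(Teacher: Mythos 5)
Your reduction is the right one, and it is the same one the paper intends: everything comes down to showing that the homogeneous existential expansion $\bC$ of $\bB$ is finitely bounded, since $\bB$ is then by definition a reduct of the finitely bounded homogeneous structure $\bC$. You are also right, and it is worth saying explicitly as you do, that this does \emph{not} come for free from homogeneity plus a finite relational signature. But that observation is precisely where your proof stops being a proof. The entire content of the proposition is the claim that $\Age(\bC)=\Forb(\mathcal F)$ for some finite $\mathcal F$, i.e.\ that membership in $\Age(\bC)$ is determined by substructures of some fixed size $N$, and you defer exactly this claim to ``the technical core'' without establishing it. What you do supply --- adjoining a witness block for each positive $R_i$-tuple, and routing each negative tuple into the (unique, by homogeneity) witness-free orbit of its quantifier-free type --- does not yet assemble into an embedding: when you adjoin a witness block for one positive tuple you must also choose the atoms relating its elements to \emph{all} remaining elements of $\bD$ and to all other witness blocks, and nothing in your argument shows that such a choice can be made consistently with atomicity while avoiding the accidental creation of witnesses for the negative tuples. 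Appealing to ``amalgamation in $\Age(\bC)$'' at this point is circular: amalgamation over the expanded signature $\rho$ is available only once you know which $\rho$-structures lie in $\Age(\bC)$, which is the question being decided, and in any case the amalgamation property alone does not let you prescribe the relation between the two new points of an amalgam, which is what you would need to realise $\bD$ itself rather than some other completion of it.

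For comparison, the paper's argument for this proposition proceeds by the same strategy but tries to be constructive where you are not: it augments the finite set of bounds for $\Age(\bB)$ (Proposition~\ref{prop:fin-bound}) by an explicit finite family of forbidden $\rho$-structures read off from the existential definitions $\phi_R$ --- structures on the variable set of $\phi_R(y_1,\dots,y_k)$, completed to total $\rho$-structures in all possible ways, which witness an incompatibility between the $R$-labelling of $(y_1,\dots,y_k)$ and the (non)satisfaction of $\phi_R$. That is, the bound $N$ and the set $\mathcal F$ are extracted directly from the formulas $\phi_R$, and the verification that these bounds suffice is exactly the assembly argument you postpone. So the gap in your write-up is not a stylistic omission: it is the step that distinguishes this proposition from the false general statement ``homogeneous with finite relational signature implies finitely bounded,'' and until you fix a concrete $N$, exhibit $\mathcal F$, and prove that every $\rho$-structure omitting $\mathcal F$ embeds into $\bC$ (say by a one-point-extension induction in which each extension step is justified by an $\le N$-element substructure condition), the proposition is not proved.
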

\begin{proof}
Let $\bC$ be the homogeneous expansion of $\bB$ by finitely many relations $R$ with existential definitions
$\phi_R$ in $\bB$. It suffices to show that $\bC$ is finitely bounded. 
Let $\rho$ be the signature of $\bB$,
and let $\sigma \supseteq \rho$ be the signature of $\bC$. 
By Proposition~\ref{prop:fin-bound}, there exists a finite set $\mathcal F$ of $\rho$-structures such that $\Age(\bB) = \Forb({\mathcal F})$. 
Add to $\mathcal F$ for each 
$R \in \sigma$ of arity $k$ all $\sigma$-structures
$\bA_R$
obtained from $\phi_R(y_1,\dots,y_k)$ as follows: 
the domain of $\bA_R$ are 
the variables of $\phi_R(y_1,\dots,y_k)$;
the structure satisfies $R(y_1,\dots,y_k)$, and it also 
we then complete it to a $\sigma$-structure in all possible ways. 

 and which satisfy $\phi$. 
\end{proof}

Example: left-linear point algebra.

}


\section{Conclusion and Open Problems}
\label{sect:conclusion}
Hirsch's Really Big Complexity Problem (RBCP) for finite relation algebras remains really big. 
However, the network satisfaction problem of every finite relation algebra 
known to the author can be formulated as the CSP of a structure that falls into the scope of the infinite-domain tractability conjecture. Most of the classical examples even have a normal representation, and therefore the RBCP for those is implied by the infinite-domain tractability conjecture (Corollary~\ref{cor:normal}). 
We presented an algorithm that tests whether a given finite relation algebra has a normal representation.

\begin{figure}
  \begin{center}
  \includegraphics[scale=0.5]{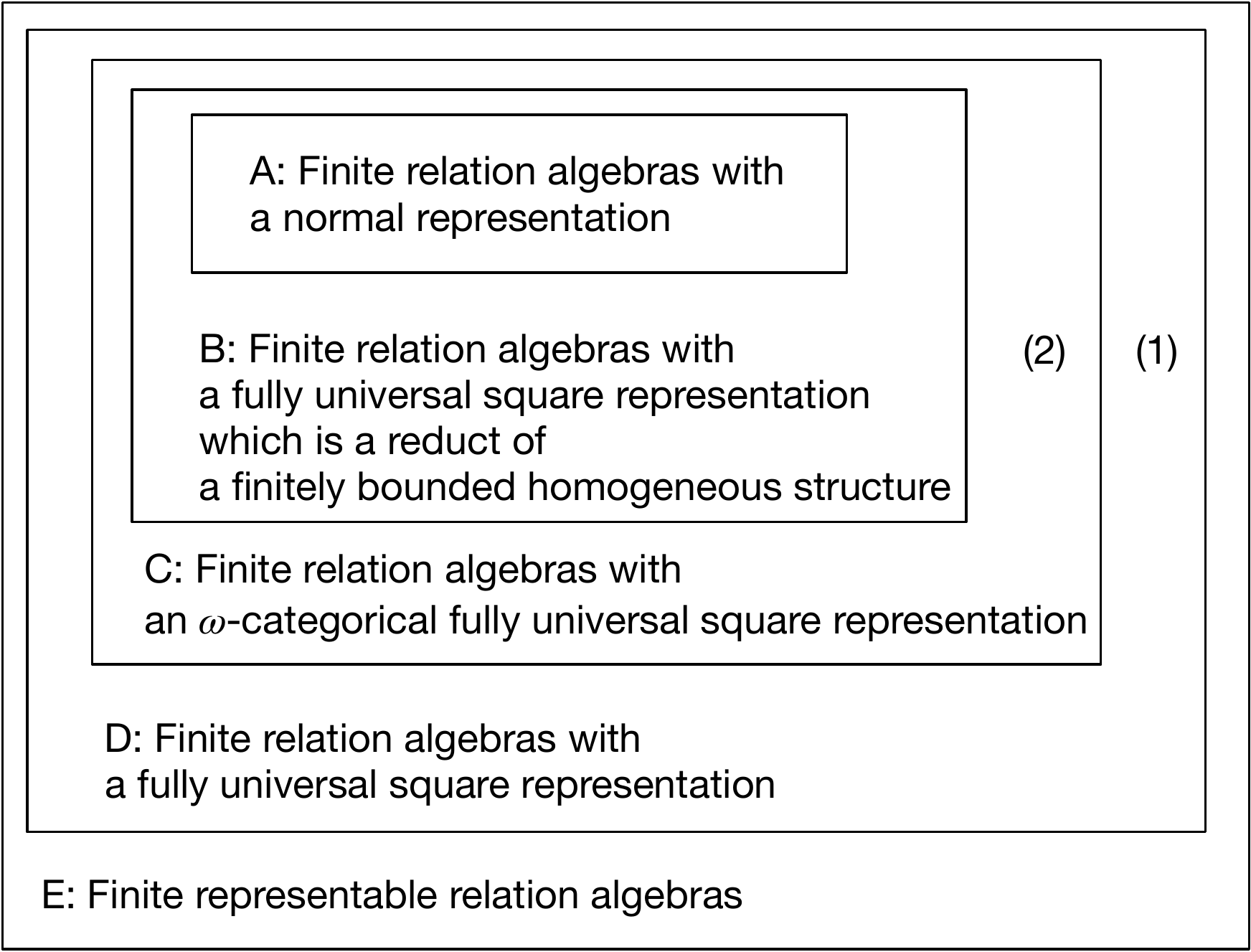}
  \end{center}
\caption{Subclasses of finite representable relation algebras. Membership of relation algebras from D to the innermost box A is decidable (Theorem~\ref{thm:normal}). 
Example~\ref{expl:branching-time} separates Box A and Box B. 
The finite relation algebra from~\cite{Hirsch-Undecidable} separates Box D and E. 
Box B falls into the scope of the infinite-domain tractability conjecture. 
Boxes C and D might also fall into the scope of this conjecture (see Problem (1) and Problem (2)). 
}
  \label{fig:5-Levels}
\end{figure}

To better understand the RBCP in general, or at least for finite relation algebras with fully universal square representation, we need a better understanding of representations of finite relation algebras with good model-theoretic properties. We mention some concrete open questions;
also see  Figure~\ref{fig:5-Levels}.

\begin{enumerate}
%
\item Is there a finite relation algebra with a fully
universal square representation, but without an $\omega$-categorical 
fully universal square representation? 
\item Is there a finite relation algebra with an $\omega$-categorical fully universal square representation 
but without a fully universal square representation which is not a first-order reduct of a finitely bounded homogeneous structure? 
\item Find a finite relation
algebra $\fA$ such that there is no 
$\omega$-categorical
structure $\bB$ such that the general network satisfaction problem for $\fA$ equals the constraint satisfaction problem for $\bB$. (Note that we do not insist on $\bB$ being a representation of $\fA$.) 
\item Find a finite relation algebra $\fA$ with an $\omega$-categorical fully universal square representation which is not the orbital relation algebra of an $\omega$-categorical structure. 
\end{enumerate}


\bibliographystyle{alpha}
\bibliography{local.bib}

\end{document}